\newcommand{\ba}{\begin{array}}
\newcommand{\ea}{\end{array}}
\newcommand{\bc}{\begin{center}}
\newcommand{\ec}{\end{center}}
\newcommand{\beqn}[1]{\begin{equation}\label{#1}}
\newcommand{\eeqn}{\end{equation}}
\newcommand{\be}{\begin{equation}}
\newcommand{\ee}{\end{equation}}
\newcommand{\beqnn}{\begin{eqnarray}}
\newcommand{\eeqnn}{\end{eqnarray}}
\newcommand{\col}{{\rm col}}
\newcommand{\diag}{{\rm diag}}
\newcommand{\T}{{\rm T}}
\theoremstyle{plain}
\newtheorem{thm}{Theorem}
\newtheorem{lem}{Lemma}
\newtheorem{rem}{Remark}
\newtheorem{Def}{Definition}
\newtheorem{asm}{Assumption}
\begin{document}
\title{A Robust Distributed Model Predictive Control Framework for Consensus of Multi-Agent Systems with Input Constraints and Varying Delays}
%\title{Robust distributed model predictive control protocol for consensus of multi-agent systems with varying delays and input constraints}
%{Robust distributed model predictive consensus against information delays}
%{Constrained consensus with resilience to information delays: A robust distributed model predictive control approach}%{Robust distributed model predictive consensus with resilience to information delays} %
% Constrained consensus with resilience to information delays: A robust distributed model predictive control approach
% \title{Distributed Model Predictive Control for Resilient Consensus of Constrained Multi-agent System}
\author{Henglai~Wei,~\IEEEmembership{Member,~IEEE}
        Changxin~Liu,~\IEEEmembership{Member,~IEEE,}
        and~Yang~Shi,~\IEEEmembership{Fellow,~IEEE}% <-this % stops a space
\thanks{H. Wei, C. Liu, Y. Shi are with the Department
of Mechanical Engineering, University of Victoria, Victoria,
BC, V8W 3P6, Canada (e-mail: henglaiwei@uvic.ca; chxliu@uvic.ca; yshi@uvic.ca).}
\thanks{Manuscript received September 1, 2021; revised xx xx, 2021.}}
\markboth{IEEE XXX,~Vol.~14, No.~8, August~2015}%
{Shell \MakeLowercase{\textit{et al.}}: Bare Demo of IEEEtran.cls for IEEE Journals}
\maketitle
%%%%%%%%%%%%%%%%%%%%%%%%%%%%%%%%%%%%%%%%%%%%%%%%%%%%%%%%%%%%%%%%%%%%%%%%%%%%%%
% Currently available protocols either 
\begin{abstract}
This paper studies the consensus problem of general linear discrete-time multi-agent systems (MAS) with input constraints and bounded time-varying communication delays. We propose a robust distributed model predictive control (DMPC) consensus protocol that integrates the offline consensus design with online DMPC optimization to exploit their respective advantages. More precisely, each agent is equipped with an offline consensus protocol, which is \emph{a priori} designed, depending on its immediate neighbors' estimated states. Further, the estimation errors propagated over time due to inexact neighboring information are proved bounded under mild technical assumptions, based on which a robust DMPC strategy is deliberately designed to achieve robust consensus while satisfying input constraints. Moreover, it is shown that, with the suitably designed cost function and constraints, the feasibility of the associated optimization problem can be recursively ensured. We further provide the consensus convergence result of the constrained MAS in the presence of bounded varying delays. Finally, two numerical examples are given to verify the effectiveness of the proposed distributed consensus algorithm.
\end{abstract}
\begin{IEEEkeywords}
Robust DMPC, Constrained consensus, Multi-agent systems, Time-varying communication delays
\end{IEEEkeywords}
\IEEEpeerreviewmaketitle
%%%%%%%%%%%%%%%%%%%%%%%%%%%%%%%%%%%%%%%%%%%%%%%%%%%%%%%%%%%%%%%%%%%%%%%%%%%%%
       %%%%%%%%%%%%%%%%%%%%%%% Section I %%%%%%%%%%%%%%%%%%%%%%%%%%%%%
%%%%%%%%%%%%%%%%%%%%%%%%%%%%%%%%%%%%%%%%%%%%%%%%%%%%%%%%%%%%%%%%%%%%%%%%%%%%%
\section{Introduction}
\label{whl6-sec:1}
Over recent decades, much research effort has been devoted to the distributed coordination of the multi-agent systems (MAS) since it finds broad applications in numerous areas such as connected vehicles \cite{ju2020distributed}, distributed optimization \cite{nedic2010constrained}, and sensor networks \cite{susca2008monitoring}. As one of the fundamental control problems for the MAS, consensus requires that all agents achieve an agreement of common interest based on the local and neighboring information. To date, many decent algorithms have exhibited impressive results for the MAS; see \cite{olfati2004consensus,ren2005consensus,olfati2007consensus} and the references therein. Nonetheless, reaching consensus becomes more challenging when the MAS are subject to constraints and varying communication delays in real-world applications. A relatively thorough review of related results on the consensus of constrained MAS is presented in the following.

\subsection{Related work} 
The research on the constrained consensus can be found in \cite{ren2008consensus1,lin2013constrained,lin2016distributed,zhao2016global,xie2019global,yang2014global,ong2021consensus}. Several works focus on the MAS with input saturation; see, e.g., \cite{ren2008consensus1,zhao2016global,xie2019global,yang2014global}. These solutions use hyperbolic tangent functions or the low gain feedback technique to construct bounded consensus protocols. More recently, the work \cite{ong2021consensus} presents an output consensus scheme based on the reference governor and the maximal constraint admissible invariable set for the heterogeneous linear MAS with input constraints and switching networks. For the case of MAS subject to state constraints, the authors of \cite{lin2013constrained,lin2016distributed} design the projection algorithms based on the projection operator theory, which attains consensus convergence while meeting state constraints. The work \cite{lin2016distributed} extends the result in \cite{lin2013constrained} to solve the constrained consensus problem of the MAS with switching topology and delayed information. However, explicitly optimizing the consensus performance for the constrained MAS remains a challenge, which motivates the consensus protocol design in this paper. 

\begin{table*}[!ht]
\caption{An overview of existing DMPC-based consensus algorithms.}
\centering
\begin{tabular}{l l l c c }
\toprule
Method  & Constraint types&System dynamics  &Communication delays & Robustness\\
\midrule
\cite{ferrari2009model}  & Unconstrained&Single and double integrators  & --&-- \\
\hline
\cite{muller2012cooperative}  & State and input constraints&Linear and nonlinear systems  & --&--\\
\hline
\cite{zhan2013consensus} & Unconstrained&Single integrators& --&--\\
\hline
\cite{cheng2015distributed} & Input constraints  &Single and double integrators    & --&--\\
\hline
\cite{li2015receding}  & Unconstrained&Linear system& --&--\\
\hline
\cite{li2018receding}  & Input constraints&Linear system  & --&--\\
\hline
\cite{hirche2020distributed}  & State and input constraints& Linear system& --&--\\
\hline
\cite{copp2019distributed}  & Input constraints&Linear and nonlinear systems & --&Min-max DMPC\\
 \hline
\cite{wang2021linear}  & Unconstrained&Linear system & --&--\\
\hline
\textbf{This paper}& Input constraints&Linear system & Bounded varying delays&Tube-based DMPC\\
\bottomrule
\end{tabular}
\label{whl6-tab:1}
\end{table*}

Another powerful class of methods for solving the constrained consensus problem is distributed model predictive control (DMPC) since it has remarkable advantages in handling practical constraints and providing optimal control performance for the MAS \cite{christofides2013distributed}. Previous works along this line include \cite{johansson2006distributed,keviczky2008study,ferrari2009model,muller2012cooperative,zhan2013consensus,cheng2015distributed,li2015neighbor,gao2017distributed,li2018receding,li2015receding,zhan2018distributed,copp2019distributed,hirche2020distributed,wang2020distributed,wang2021linear}. In \cite{ferrari2009model}, a DMPC-based consensus protocol is proposed for the MAS with single and double integrator dynamics over time-varying networks. The geometric properties of the optimal path for each agent are exploited to analyze the consensus convergence. Based on the above technique, a contractive constraint is designed and then incorporated into the DMPC problem for the MAS with double integrator dynamics to achieve consensus. The authors of \cite{zhan2013consensus} present an analytical DMPC solution to the unconstrained average consensus problem and derive the feasible range of the sampling interval for the sampled-data MAS. In \cite{cheng2015distributed}, a DMPC-based consensus framework with adjustable prediction horizon is developed to solve the consensus problem of the discrete-time MAS with double integrator dynamics, input constraints, and switching communication networks. The authors of \cite{li2015neighbor,li2015receding} derive an explicit consensus protocol from the unconstrained DMPC optimization problem for first-order and general linear MAS, where a necessary and sufficient consensus condition is provided. Later, the work \cite{li2018receding} studies the optimal consensus problem for the linear MAS with semi-stable and unstable dynamics and control input constraints. Because of the heavy communication burden and external disturbances, the self-triggered DMPC \cite{zhan2018distributed} and output-feedback DMPC \cite{copp2019distributed} are developed for addressing these practical issues of the constrained MAS, respectively. Recently, the unconstrained consensus problem of the asynchronous MAS with single and double integrator dynamics is solved via DMPC in \cite{wang2020distributed}. The control inputs and consensus states are determined by solving the optimization problem in a distributed manner. The extension of this work for the general linear MAS is discussed in \cite{wang2021linear}. In this scheme, a consensus manifold is introduced such that the final consensus state and input sequence are regarded as augmented decision variables of the DMPC optimization problem. The output consensus problem of the heterogeneous discrete-time MAS subject to state and control input constraints is considered in \cite{hirche2020distributed}. Based on the idea of tracking MPC in \cite{limon2008mpc}, the combination of tracking cost and consensus cost in the overall cost function is beneficial to achieving consensus. It is worth emphasizing that the aforementioned results either focus on simple integrator dynamics \cite{ferrari2009model,cheng2015distributed,wang2020distributed} or only consider unconstrained consensus problems \cite{zhan2013consensus,li2015neighbor,li2015receding}. A detailed comparison between existing DMPC-based consensus algorithms is summarized in TABLE \ref{whl6-tab:1}. However, the above algorithms typically require simultaneous computation and communication at each instant and neglect the transmission delays, rendering them not feasible in many practical scenarios.

\subsection{Contribution} 
In this paper, a robust DMPC-based consensus protocol is proposed to ensure that the general linear constrained MAS reach consensus despite bounded time-varying delays. The main contribution of this paper is threefold. 
\begin{itemize}
\item[1)] \textbf{Distributed consensus protocol for the constrained MAS with delays}: 
We propose a distributed consensus protocol for the general linear MAS subject to input constraints and varying delays. Based on the inverse optimal control \cite{hui20092}, a consensus protocol is firstly designed offline for the unconstrained MAS to achieve global optimality and set stability. Minimizing the gap between the online DMPC input and the predesigned consensus input guarantees the consensus performance of the MAS while satisfying control input constraints. In contrast to existing DMPC-based consensus approaches \cite{zhan2018distributed,gao2017distributed}, the knowledge of the communication topology is explicitly exploited in the design of the offline consensus protocol, which facilitates the analysis of the consensus convergence of MAS. 
\item[2)] \textbf{Robust DMPC handles delay-induced estimation errors}: 
Existing DMPC-based consensus algorithms typically require that all agents simultaneously compute control inputs and exchange optimal predicted states via delay-free wireless networks at each time instant (see, e.g., \cite{li2016distributed,zhan2018distributed}). In this work, we relax these strict communication requirements and consider the non-simultaneous communication and computation, and delay-involved communication networks, in which the optimal predicted state trajectory (i.e., the assumed predicted state trajectory) of each agent broadcast at some previous time instant is used to estimate the current optimal predicted state trajectory. The estimated states would inevitably result in estimation errors that might prevent the MAS from achieving consensus. In this context, we leverage tube-based MPC techniques \cite{chisci2001systems,mayne2005robust} to account for the estimation errors. By bounding the deviation between the assumed and actual predicted states using a properly designed estimation error set, the MAS converges to a neighborhood of the consensus set. %This makes our algorithm applicable to more complex communication environment.
\item[3)] \textbf{Guaranteed feasibility and consensus convergence regardless of delays}: 
Given the robust DMPC-based consensus protocol, conditions for preserving the recursive feasibility are developed. Furthermore, we provide a rigorous theoretical analysis of the consensus convergence for general linear constrained MAS with bounded time-varying communication delays. Finally, two numerical examples are provided to verify the theoretical results.
\end{itemize}

The remainder of this paper is outlined as follows. Section \ref{whl6-sec:2} recalls some basic definitions and describes the problem of interest. Next, Section \ref{whl6-sec:3} introduces the delayed communication networks among the MAS. In Section \ref{whl6-sec:4}, the main results, including the robust DMPC optimization problem and the consensus algorithm, are presented. Section \ref{whl6-sec:5} further provides the corresponding theoretical analysis on the recursive feasibility and the consensus convergence. Two examples are given to illustrate the theoretical results in Section \ref{whl6-sec:6} before concluding remarks are stated in Section \ref{whl6-sec:7}.

\subsection{Notations} The symbols $\mathbb{N}_{\geq m}$ and $\mathbb{N}_{[m,n]}$ denote the sets of the integers greater than or equal to $m$ and integers in the interval $[m,n]$, respectively. $\mathbb{R}^n$ and $\mathbb{R}^{m\times n}$ represent the $n$-dimensional Euclidean space and the set of all $m\times n$ real matrices, respectively. For ${x}\in \mathbb{R}^n$, $\|{x}\|$ denotes the Euclidean norm, $\|{x}\|_{{P}}:=\sqrt{{x}^\text{T}P{x}}$ denotes the weighted Euclidean norm, where $P$ is positive definite. $[{x}_1^\text{T},\dots,{x}_n^\text{T}]^\text{T}$ is written as $\col({x}_1,\dots,{x}_n)$. Given two sets $\mathcal{X},\mathcal{Y}\subseteq \mathbb{R}^n$, the set operation $\mathcal{X}\backslash \mathcal{Y}$ is defined as $\mathcal{X}\backslash \mathcal{Y}:=\{{x}\mid {x}\in\mathcal{X},{x}\notin \mathcal{Y}\}$. The set addition is defined by $\mathcal{X}\oplus\mathcal{Y}:=\{x+y\mid x\in\mathcal{X},y\in\mathcal{Y}\}$, the set subtraction is $\mathcal{X}\ominus\mathcal{Y}:=\{x\in\mathbb{R}^n\mid x\oplus\mathcal{Y}\subseteq\mathcal{X}\}$, and the set multiplication is $K\mathcal{X}:=\{Kx\mid x\in\mathcal{X}\}$, with $K\in\mathbb{R}^{m\times n}$. The Minkowski sum of multiple sets is given by $\mathcal{X}_1\oplus\mathcal{X}_2\oplus\dots\oplus\mathcal{X}_M:=\bigoplus_{i=1}^M\mathcal{X}_i$. The distance between the state ${x}$ and the set $\mathcal{Y}$ is defined as $|{x}|_{\mathcal{Y}}:=\inf_{{y}\in\mathcal{Y}}\|{x}-{y}\|$. $I_M\in\mathbb{R}^{M\times M}$ denotes the identity matrix. $\diag(C_1,C_2,\dots,C_M)$ represents a block diagonal matrix with main diagonal block matrix $C_i$, $i=1,2,\dots,M$. $\bar{\lambda}({P})$ and $\underline{\lambda}(P)$ denote the largest and smallest eigenvalues of the matrix $P$, respectively. For matrices $C\in\mathbb{R}^{m\times n}$, $D\in\mathbb{R}^{p\times q}$, the Kronecker product is denoted by
$$C\otimes D=\left[\begin{array}{ccc}
   c_{11}D & \cdots & c_{1n}D \\
   \vdots & \ddots & \vdots \\
   c_{m1}D & \cdots & c_{mn}D
\end{array}\right].$$
${x}(t)$ denotes the state ${x}$ at time $t$, and ${x}(t+k|t)$ denotes the predicted state at future time $t+k$ determined at time $t$.
%%%%%%%%%%%%%%%%%%%%%%%%%%%%%%%%%%%%%%%%%%%%%%%%%%%%%%%%%%%%%%%%%%%%%%%%%%%%%%
      %%%%%%%%%%%%%%%%%%%%%%% Section II %%%%%%%%%%%%%%%%%%%%%%%%%
%%%%%%%%%%%%%%%%%%%%%%%%%%%%%%%%%%%%%%%%%%%%%%%%%%%%%%%%%%%%%%%%%%%%%%%%%%%%%%
\section{Preliminaries}
\label{whl6-sec:2}
      %%%%%%%%%%%%%%%%%%%%%%% Section 2-1 %%%%%%%%%%%%%%%%%%%%%%%%%
\subsection{Communication graph}
\label{whl6-sec:2-1}
An undirected graph $\mathcal{G}=\{\mathcal{V},\mathcal{E}\}$ is used to describe the information exchange among the MAS, where $\mathcal{V}:=\{1,2,\dots,M\}$ denotes the vertex set and $\mathcal{E}:=\{(i,j)\mid i,j\in\mathcal{V},i\neq j\}$ denotes the edge set. The neighbor set of agent $i$ is denoted by $\mathcal{N}_i:=\{{j}\in\mathcal{V}\mid (i,j)\in\mathcal{E},i\neq j\}$ and the number of agents in $\mathcal{N}_i$ is denoted as $|\mathcal{N}_i|$. Let $\mathcal{A}=[a_{ij}]\in\mathbb{R}^{M\times M}$ be the weighted adjacency matrix of $\mathcal{G}$ with $j\in\mathcal{N}_i$, $a_{ij}=1/|\mathcal{N}_i|$ and $a_{ij}=0$ otherwise; $a_{ii}=\sum_{j=1}^Ma_{ij}=1$ for all $i\neq j$. The Laplacian matrix is $\mathcal{L}=I_M-\mathcal{A}$. 

In this paper, the broadcast communication model is adopted for the MAS, i.e., each agent broadcasts the information to its neighbors via the broadcaster and receives the information from its neighbors via the receiver at each time instant.
      %%%%%%%%%%%%%%%%%%%%%%% Section 2-2 %%%%%%%%%%%%%%%%%%%%%%%%%
\subsection{Set stability}
\label{whl6-sec:2-2}
Consider a general discrete-time system
\begin{equation}\label{whl6-eq:1}
x(t+1)=g(x(t)),\ t\in\mathbb{N}_{\geq 0},
\end{equation}
where $x(t)\in\mathbb{R}^n$, $g:\mathbb{R}^n\to\mathbb{R}^n$ and the solution is denoted by $x(t,x(0))$.

Two definitions of the set stability are introduced next.

\begin{Def} \cite{jiang2002converse}
For the system in \eqref{whl6-eq:1}, a set $\Omega\subset\mathbb{R}^n$ is forward invariant, if $\forall x(0)\in \Omega$, it follows that $x(t,x(0))\in \Omega$, $t\in \mathbb{N}_{\geq 0}$. 
\end{Def}

\begin{Def} {\cite{jiang2002converse}}
The system in \eqref{whl6-eq:1} is asymptotically stable with respect to a forward invariant set $\Pi\subset\mathbb{R}^n$ if the following conditions hold:
\begin{itemize}
\item[1)] Lyapunov stability: for each $\epsilon>0$, there exists some $\delta>0$ such that $|x(0)|_{\Pi}<\delta \Rightarrow |x(t,x(0))|_{\Pi}<\epsilon$, $\forall t\in\mathbb{N}_{\geq 0}$;
\item[2)] Attraction: for $x(0)\in\mathcal{X}\subset\mathbb{R}^n$, $\lim_{t\to\infty}|x(t,x(0))|_{\Pi}=0$, where $\mathcal{X}$ is the region of attraction.
\end{itemize} 
\end{Def}
      %%%%%%%%%%%%%%%%%%%%%%% Section 2-3 %%%%%%%%%%%%%%%%%%%%%%%%%
\subsection{Problem formulation}
\label{whl6-sec:2-3}
Consider a group of $M$ agents that are inter-connected via an undirected graph $\mathcal{G}$. The system model of agent $i$, $i\in \mathcal{V}$ is characterized by 
\begin{equation}\label{whl6-eq:2}
x_i(t+1)=Ax_i(t)+Bu_i(t), \ t\in\mathbb{N}_{\geq 0},
\end{equation} 
where $x_i(t)\in\mathbb{R}^n$ and $u_i(t)\in\mathbb{R}^m$ are the system state and control input, respectively. Agent $i$, $i\in\mathcal{V}$ is subject to control input constraints, i.e., 
\begin{equation}\label{whl6-eq:3}
u_i(t)\in \mathcal{U}_i,\ t\in\mathbb{N}_{\geq 0},
\end{equation}
in which the set $\mathcal{U}_i\subset\mathbb{R}^m$ contains the origin.

\begin{Def} {\cite{you2011network}}
The discrete-time MAS in \eqref{whl6-eq:2} over the fixed graph $\mathcal{G}$ is said to achieve consensus, if for any $x_i(0)$, $i\in\mathcal{V}$, there exists a consensus protocol $u_i(t)=\kappa_i(x_i(t),x_{-i}(t))$ such that $\lim_{t\to+\infty}\|x_i(t)-x_j(t)\|=0$, $j\in\mathcal{N}_i$, where $x_{-i}(t)$ represents the collection of agent $i$'s neighboring states and $\kappa_i:\underbrace{\mathbb{R}^n\times\cdots \times \mathbb{R}^n}_{|\mathcal{N}_i|+1}\to\mathbb{R}^m$.
\end{Def}

In this work, we make the standard assumption on the system dynamics and the communication graph as follows.
\begin{asm}\label{whl6-asm:1}
The pair $(A,B)$ of the MAS in \eqref{whl6-eq:2} is stabilizable, and the associated communication graph $\mathcal{G}$ is connected.
\end{asm}
The connectivity of the graph $\mathcal{G}$ implies that the eigenvalues of $\mathcal{L}$ satisfy $\lambda_i\geq0$, $i=1,2,\dots,M$ \cite{olfati2004consensus}.

Let $\mathbf{x}(t):=\col(x_1(t),x_2(t),\dots,x_M(t))$ and $\mathbf{u}(t):=\col(u_1(t),u_2(t),\dots,u_M(t))$. Then the system in \eqref{whl6-eq:2} can be written in a compact form
\begin{equation}\label{whl6-eq:4}
\mathbf{x}(t+1)=(I_M\otimes A)\mathbf{x}(t)+(I_M\otimes B)\mathbf{u}(t),
\end{equation}
in which $\mathbf{x}(0)=\col(x_1(0),x_2(0),\dots,x_M(0))$ is the initial state of the MAS, the augmented input constraint defined by the Cartesian product of multiple control input constraint sets is $\mathcal{U}:=\mathcal{U}_1\times\mathcal{U}_2\times \dots\times\mathcal{U}_M$. The consensus problem under consideration can be equivalently transformed into a set stabilization problem subject to control input constraints. A consensus set is introduced for the MAS in \eqref{whl6-eq:4} as
\begin{equation}
\mathcal{C}:=\{\mathbf{x}(t)\in\mathbb{R}^{Mn}\mid x_1(t)=x_2(t)=\dots=x_M(t)\},
\end{equation}
where $t\in\mathbb{N}_{\geq 0}$. The MAS in \eqref{whl6-eq:2} achieving consensus implies that the system state $\mathbf{x}(t)$ reaches the consensus set $\mathcal{C}$. It follows that the distance between the state $\mathbf{x}(t)$ and the consensus set $\mathcal{C}$ becomes zero, i.e., $|\mathbf{x}(t)|_{\mathcal{C}}=0$.

\begin{figure*}[!ht]
\centering
\subfloat[Sequential execution \cite{richards2007robust,muller2012cooperative}.]{%
\centering
  \includegraphics[clip,width=0.66\columnwidth]{./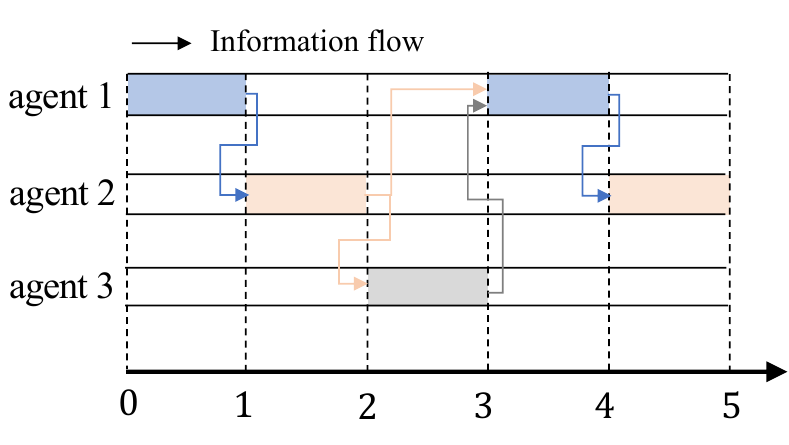}\label{whl6-fig:1-1}%
}
\subfloat[Simultaneous parallel execution \cite{wang2017distributed}.]{%
\centering
  \includegraphics[clip,width=0.66\columnwidth]{./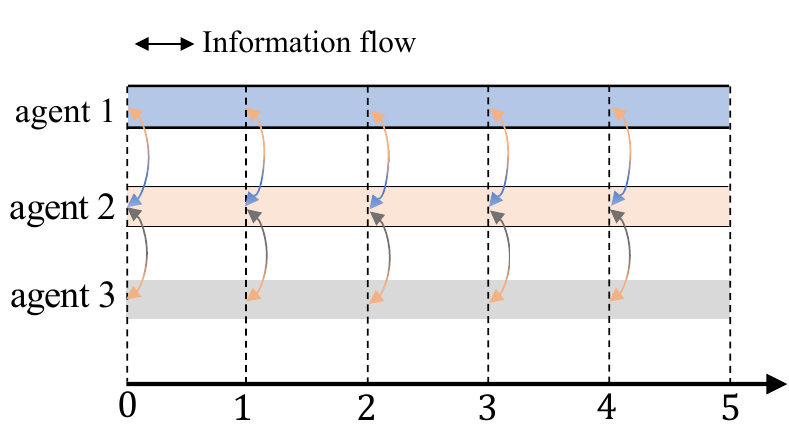}\label{whl6-fig:1-2}%
}
\subfloat[Non-simultaneous parallel execution.]{%
  \includegraphics[clip,width=0.66\columnwidth]{./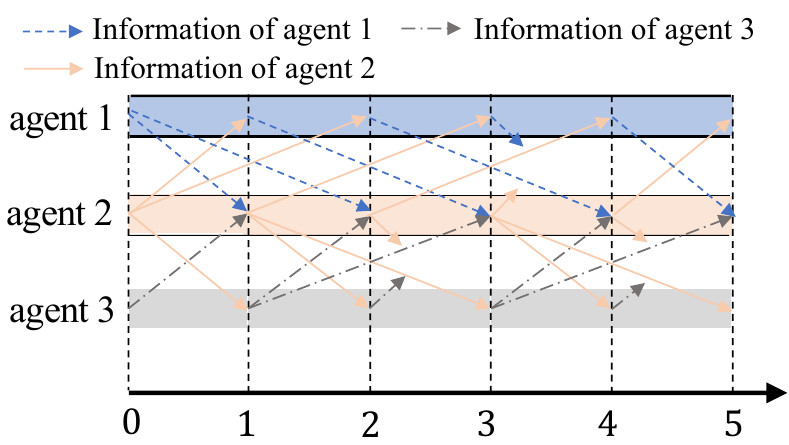}\label{whl6-fig:1-3}%
}
\caption{Three types of implementations of DMPC algorithms for the MAS with $\mathcal{N}_1=\{2\}$, $\mathcal{N}_2=\{1,3\}$, $\mathcal{N}_3=\{2\}$.}\label{whl6-fig:1}
\end{figure*}

The following lemma (\cite[Theorem 1]{li2009consensus}) provides a sufficient and necessary condition for the MAS to reach consensus. 
\begin{lem}\label{whl6-lem:1}
For agent $i$, $i\in\mathcal{V}$ in \eqref{whl6-eq:2} over a fixed communication graph $\mathcal{G}$, the consensus of the MAS can be achieved if and only if there exists a predesigned consensus gain $K\in\mathbb{R}^{m\times n}$ such that $\rho(A+\lambda_iBK)<1$, where $\lambda_i$, $i=2,3,\dots,M$, are the nonzero eigenvalues of the Laplacian matrix $\mathcal{L}$.
\end{lem}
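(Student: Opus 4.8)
The plan is to collapse the coupled network dynamics into $M$ decoupled $n$-dimensional linear systems by block-diagonalizing the Laplacian, and then to read off consensus as the decay of the disagreement modes. First I would fix the linear protocol $u_i(t)=K\sum_{j\in\mathcal{N}_i}a_{ij}\bigl(x_i(t)-x_j(t)\bigr)$; since each row of $\mathcal{A}$ sums to one and $\mathcal{L}=I_M-\mathcal{A}$, this stacks into $\mathbf{u}(t)=(\mathcal{L}\otimes K)\mathbf{x}(t)$, and substituting into \eqref{whl6-eq:4} gives the autonomous closed loop
\begin{equation}\label{plan-cl}
\mathbf{x}(t+1)=\bigl(I_M\otimes A+\mathcal{L}\otimes BK\bigr)\mathbf{x}(t).
\end{equation}

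Because $\mathcal{G}$ is undirected and connected, $\mathcal{L}$ is diagonalizable with real eigenvalues $0=\lambda_1<\lambda_2\le\cdots\le\lambda_M$, the zero eigenvalue being simple with right eigenvector $\mathbf{1}$. I would pick an invertible $T$ whose first column is $\mathbf{1}$ and which satisfies $T^{-1}\mathcal{L}T=\diag(\lambda_1,\dots,\lambda_M)$, and apply the coordinate change $\mathbf{z}(t)=(T^{-1}\otimes I_n)\mathbf{x}(t)$. Using the mixed-product rule for Kronecker products, \eqref{plan-cl} decouples into
\begin{equation}\label{plan-dec}
z_i(t+1)=(A+\lambda_iBK)\,z_i(t),\qquad i=1,\dots,M,
\end{equation}
so each mode evolves independently and $z_i(t)=(A+\lambda_iBK)^t z_i(0)$.

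The decisive step is to identify consensus with the decay of the $M-1$ disagreement modes. Because $T^{-1}\mathbf{1}=e_1$ (the first unit vector), a state lies in $\mathcal{C}$ exactly when $z_2=\cdots=z_M=0$, while $z_1$ (the $\lambda_1=0$ mode, governed by $z_1(t+1)=Az_1(t)$) merely parametrizes the common trajectory and is irrelevant to agreement. Since the columns of $T$ are linearly independent, $|\mathbf{x}(t)|_{\mathcal{C}}$ is equivalent, up to constants depending only on $T$, to $\|\col(z_2(t),\dots,z_M(t))\|$; hence $|\mathbf{x}(t)|_{\mathcal{C}}\to0$ for every initial condition if and only if $z_i(t)\to0$ for each $i\in\{2,\dots,M\}$ and every $z_i(0)$. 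By \eqref{plan-dec} this latter property holds precisely when $\rho(A+\lambda_iBK)<1$ for $i=2,\dots,M$, which yields sufficiency directly and necessity by contraposition, as a mode with spectral radius at least one admits a non-decaying disagreement trajectory.

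I expect the main obstacle to be the careful treatment of the zero eigenvalue: one must separate the free consensus mode $z_1$ from the disagreement modes and verify that the Euclidean distance to the subspace $\mathcal{C}$ is genuinely equivalent to the norm of $\col(z_2,\dots,z_M)$ even though $T$ is \emph{not} orthogonal (here $\mathcal{A}$ is row-stochastic but in general not symmetric). A secondary point is justifying diagonalizability of $\mathcal{L}$ at all; this follows from its similarity to the symmetric normalized Laplacian, which also underlies the real, nonnegative spectrum quoted after Assumption \ref{whl6-asm:1}.
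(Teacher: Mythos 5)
Your proposal is correct and takes essentially the same route as the paper, which states this lemma by citation to \cite[Theorem 1]{li2009consensus} and itself reuses precisely this argument in the proof of Theorem \ref{whl6-thm:1}: diagonalize $\mathcal{L}$, decouple $I_M\otimes A+\mathcal{L}\otimes BK$ into the modes $A+\lambda_iBK$, discard the $\lambda_1=0$ consensus mode, and read off agreement from the decay of the remaining blocks. Your added care on the two points you flag is sound and slightly more rigorous than the paper's implicit use of an orthogonal $U$, since the row-normalized $\mathcal{L}=I_M-\mathcal{A}$ is symmetric only for regular graphs, while diagonalizability with real nonnegative spectrum indeed follows from its similarity to the symmetric normalized Laplacian.
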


For the recursive feasibility of the optimization problem, we make the following assumption on the predesigned consensus gain $K$.
\begin{asm}\label{whl6-asm:2}
There exist a feedback control matrix $K$ and a forward invariant set $\Omega$ such that: 1) $\rho(A+BK)<1$; 2) $K\sum_{j\in\mathcal{N}_i}a_{ij}(x_i-x_j)\in\mathcal{U}_i$, when the states $x_i\in\Omega$, $i\in\mathcal{V}$ and $x_j\in\Omega$, $j\in\mathcal{N}_i$.
\end{asm}
The forward invariant set (i.e., the terminal set) will be given in Section \ref{whl6-sec:4-2}.

The control objective is to design a distributed consensus protocol for the MAS in \eqref{whl6-eq:2} such that all agents attain:
\begin{itemize}
\item[1)] \emph{Agreement}: For agents $i$, $j\in\mathcal{V}$, the following condition 
\begin{equation}
{\lim_{t\to\infty}\|x_i(t)-x_j(t)\|=0,\ j\in\mathcal{N}_i},
\end{equation}
holds, which is equivalent to $\lim_{t\to\infty}|\mathbf{x}(t)|_{\mathcal{C}}=0$.
\item[2)] \emph{Constraint satisfaction}: The system $i$, $i\in\mathcal{V}$
\begin{equation}
x_i(t+1)=Ax_i(t)+Bu_i(t), 
\end{equation}
satisfies control input constraints in \eqref{whl6-eq:3} for all $t\in\mathbb{N}_{\geq 0}$.
\end{itemize}
%%%%%%%%%%%%%%%%%%%%%%%%%%%%%%%%%%%%%%%%%%%%%%%%%%%%%%%%%%%%%%%%%%%%%%%%%%%%%%
      %%%%%%%%%%%%%%%%%%%%%%% Section 3 %%%%%%%%%%%%%%%%%%%%%%%%%
%%%%%%%%%%%%%%%%%%%%%%%%%%%%%%%%%%%%%%%%%%%%%%%%%%%%%%%%%%%%%%%%%%%%%%%%%%%%%%
\section{Delayed communication among MAS}
\label{whl6-sec:3}
This section introduces the assumed information broadcast among the MAS over the delay-involved wireless networks.

Most existing DMPC algorithms for MAS only consider delay-free networks, and some of them even require stricter communication settings (e.g., simultaneous communication and computation). Here, three types of DMPC schemes are presented in Fig.~\ref{whl6-fig:1}, including  
\begin{itemize}
{\item[(a)] \emph{Sequential execution:} As shown in Fig.~\ref{whl6-fig:1-1}, agents solve local optimization problems and communicate with their neighbors sequentially. Only one agent calculates the optimal control inputs and broadcasts the optimal predicted state sequence at each time instant.
\item[(b)] \emph{Parallel execution with simultaneous computation and communication:} As shown in Fig.~\ref{whl6-fig:1-2}, at each time instant, all agents simultaneously receive neighbors' information, solve the optimization problems, and broadcast information to their neighbors.
\item[(c)] \emph{Parallel execution with non-simultaneous computation and communication:} As depicted in Fig.~\ref{whl6-fig:1-3}, at each time instant, agents broadcast the predicted states. However, these broadcast predicted states might be used by their neighbors a few time instants later due to the time-varying communication delays.}
\end{itemize}

In contrast to sequential algorithms \cite{richards2007robust,muller2012cooperative}, parallel DMPC algorithms \cite{li2017robust} provide a more efficient coordination solution for constrained MAS in terms of both computation and communication. Most of the existing results assume that the communication among agents is perfect, and the information can be synchronously exchanged as shown in Fig. \ref{whl6-fig:1-2} (e.g., see \cite{wang2017distributed}). However, it is unrealistic for agents to calculate the optimal predicted states while exchanging them simultaneously and instantaneously. Especially, the communication delays are unavoidable during the information exchange for the MAS in many practical scenarios, which may result in the non-simultaneous parallel algorithm execution as shown in Fig.~\ref{whl6-fig:1-3}. That is, the optimal predicted state sequence broadcast at the previous time instant $t'$, $t'\in\mathbb{N}_{[t-\bar{\tau},t-1]}$, $\bar{\tau}\in\mathbb{N}_{\geq 1}$ (the assumed predicted state sequence) is used to estimate the current optimal predicted state sequence at $t$, $t\in\mathbb{N}_{\geq 0}$. Notably, the estimation errors are regarded as external disturbances for the MAS. 

The assumed states of agent $i$ at time $t$ are constructed based on the previous optimal predicted states at time $t'$, i.e., 
\begin{equation}\label{whl6-eq:9}
\begin{aligned}
&\hat{x}_i(t+k|t)=\\
&\begin{dcases} 
      x_i^*(t+k|t'), & k\in\mathbb{N}_{[0,N']}, \\
      A\hat{x}_i(t+k-1|t)+B\hat{u}(t+k-1|t),& k\in\mathbb{N}_{(N',N]},
\end{dcases}
\end{aligned}
\end{equation}
in which $t'\in\mathbb{N}_{[t-\bar{\tau},t-1]}$, $N'=t'+N-t$, $N$ is the prediction horizon, $x_i^*(\cdot|t')$ represents the optimal state and $\hat{u}_i(t+k|t)=K\sum_{j\in\mathcal{N}_i}a_{ij}\big({x}_i(t+k|t)-\hat{x}_j(t+k|t)\big)$ with $k\in\mathbb{N}_{(N',N]}$. Note that states $x_j^*(\cdot|t')$, $j\in\mathcal{N}_i$, $t'\in\mathbb{N}_{[t-\bar{\tau},t-1]}$ are available for agent $i$ at time $t$. Let $\hat{\bm{x}}_i(t):=\{\hat{x}_i(t+k|t)\}$, $t\in\mathbb{N}_{\geq 0}$, $k\in\mathbb{N}_{[0,N]}$ be the assumed predicted state sequence of agent $i$, $i\in\mathcal{V}$ at time $t$ hereafter. 

We make the following assumption on the communication delays among the MAS.
\begin{asm}\label{whl6-asm:3}
For agent $i$, $i\in\mathcal{V}$, the time-varying communication delays $\tau(t)\in\mathbb{N}_{\geq 0}$ between agent $i$ and $j$, $j\in\mathcal{N}_i$, satisfy $1\leq \tau(t)\leq \bar{\tau}<N$, with $\tau(0)=0$ and $\bar{\tau}$ being the largest communication delay.
\end{asm}
The upper bound of the communication delays ensures that the previously broadcast assumed predicted states can be used to estimate the actually optimal predicted states. For example, at time $t$, agent $i$ can receive the assumed predicted state sequence $x_i^*(\cdot|t')$ from its neighbor $j$, $j\in\mathcal{N}_i$ broadcast at time $t'\in\mathbb{N}_{[t-\bar{\tau},t-1]}$. Then, the received information will be adopted as in \eqref{whl6-eq:9} to construct the current assumed predicted state sequence at time $t$. %On the other hand, the case of unbounded delays can be modeled as packet dropouts in networked control systems \cite{li2013network} or denial-of-service attacks in cyber-physical systems \cite{sun2019resilient}, which is beyond the scope of current work.

\begin{rem}
Many consensus algorithms have been proposed for the MAS with delays. For example, the conditions based on special stochastic matrix properties are derived for the discrete-time MAS, e.g., \cite{xiao2008asynchronous}, and linear matrix inequality conditions for the continuous-time MAS, e.g., \cite{sun2009consensus}. In contrast, the previously broadcast predicted states can be used to design the distributed consensus protocol owning to the prediction mechanism of MPC. Notably, in existing DMPC results (e.g., \cite{li2016distributed,zhan2018distributed}), the assumed predicted state sequence (one-step-ahead predicted state sequence) is generally used to estimate the current optimal state sequence, which can be regarded as a special case of the assumed predicted state sequence of the MAS with varying delays in \eqref{whl6-eq:9}, with $\tau(t)=1$.
\end{rem}

Note that restricting the deviation between the current optimal predicted states and the assumed predicted states is necessary for the MAS to reach consensus. In particular, the current optimal predicted state of agent $i$, $i\in\mathcal{V}$ is supposed to lie in a bounded neighborhood (i.e., the estimation error set $\Delta$) of the assumed predicted state, 
\begin{equation}\label{whl6-eq:10}
x_i^*(t+k|t)\in \hat{x}_i(t+k|t)\oplus \Delta,\ k\in\mathbb{N}_{[0,N]},
\end{equation}
where $\Delta:=\{\delta\in\mathbb{R}^n\mid\|\delta\|<\eta\}$ is symmetric and contains the origin with $\eta>0$. 

We consider a scenario where agents are subject to bounded varying communication delays. In this context, the control objective presented in Section~\ref{whl6-sec:2-3} can be restated as follows.
\begin{itemize}
\item[1)] \emph{Robust agreement}: For agents $i$, $j\in\mathcal{V}$, applying the proposed distributed consensus protocol, then
\begin{equation}
{\lim_{t\to\infty}\|x_i(t)-{x}_j(t)\|\leq \gamma,\ j\in\mathcal{N}_i},
\end{equation}
holds, where the invariant set $\mathcal{R}_i^\infty:=\{x\in\mathbb{R}^n\mid \|x\|\leq \gamma\}$ and $\gamma>0$.
\item[2)] \emph{Constraint satisfaction}: The system $i$, $i\in\mathcal{V}$
\begin{equation}
x_i(t+1)=Ax_i(t)+Bu_i(t), 
\end{equation}
satisfies control input constraints in \eqref{whl6-eq:3} for all $t\in\mathbb{N}_{\geq 0}$.
\end{itemize}

Note that using \textbf{Assumption \ref{whl6-asm:2}} allows us to get a minimal robust invariant set $\mathcal{R}_i^{k}$ as $k\to \infty$ \cite{mayne2005robust}. 

%%%%%%%%%%%%%%%%%%%%%%%%%%%%%%%%%%%%%%%%%%%%%%%%%%%%%%%%%%%%%%%%%%%%%%%%%%%%%%
      %%%%%%%%%%%%%%%%%%%%%%% Section 4 %%%%%%%%%%%%%%%%%%%%%%%%%
%%%%%%%%%%%%%%%%%%%%%%%%%%%%%%%%%%%%%%%%%%%%%%%%%%%%%%%%%%%%%%%%%%%%%%%%%%%%%%
\section{Robust DMPC-based consensus protocol}
\label{whl6-sec:4}
The robust DMPC-based consensus scheme for the MAS is illustrated in Fig.~\ref{whl6-fig:2}, which mainly consists of five parts: the controlled system, the robust DMPC controller, the broadcaster, the receiver and wireless communication networks.
\begin{figure}[ht]
\centering
\includegraphics[width=0.9\columnwidth]{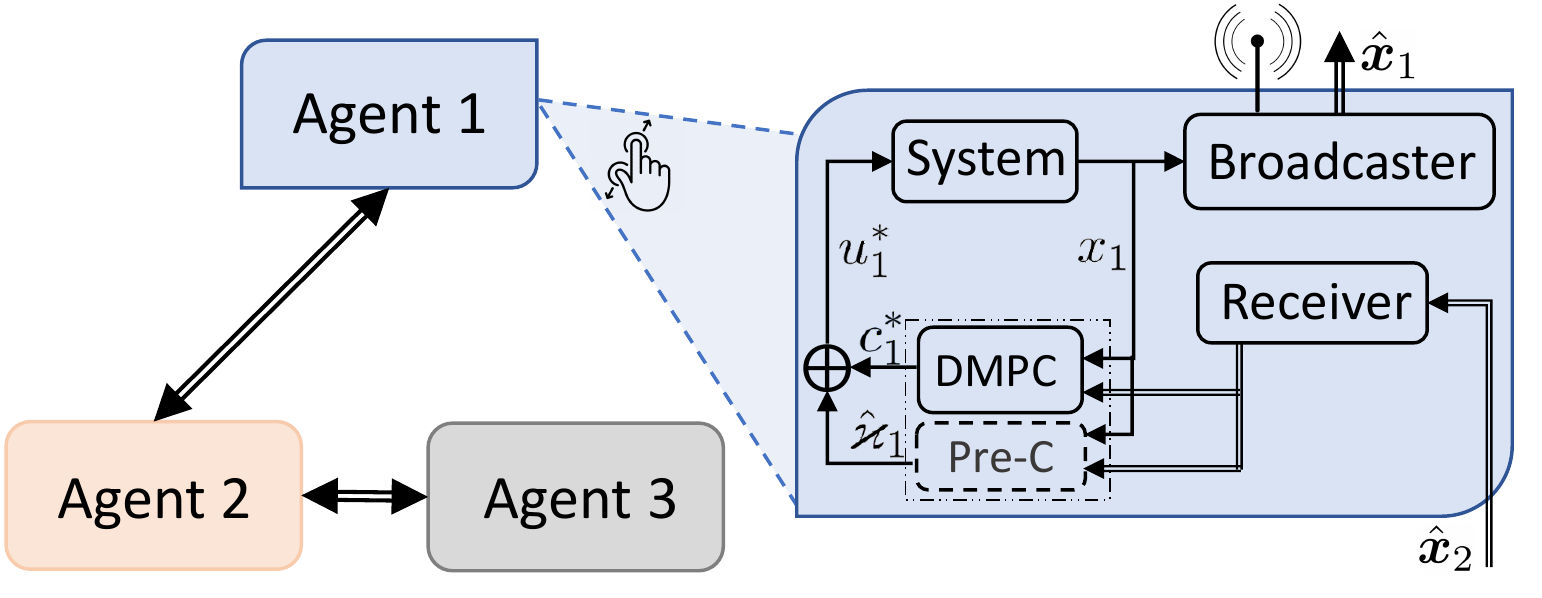}
\caption{Illustration of the robust DMPC-based consensus scheme for the MAS with $\mathcal{N}_1=\{2\}$, $\mathcal{N}_2=\{1, 3\}$, $\mathcal{N}_3=\{2\}$. At each time instant, the \textsf{Broadcaster} of agent $i$, $i\in\mathcal{V}$ broadcasts the assumed predicted state sequence $\hat{\bm{x}}_i$ and the \textsf{Receiver} receives the assumed predicted state sequence $\hat{\bm{x}}_j$ from agent $j$, $j\in\mathcal{N}_i$ via wireless communication networks.}
\label{whl6-fig:2}
\end{figure}

\subsection{Predesigned consensus protocol for the MAS without delays}
\label{whl6-sec:4-1}
At time $t$, agent $i$ calculates control inputs, broadcasts predicted system states $\hat{\bm{x}}_i(t)$ to its neighbor $j$, $j\in\mathcal{N}_i$ via the network $\mathcal{G}$ and updates system states $x_i(t)$. The consensus protocol for agent $i$ is predesigned based on the relative states between agent $i$ and $i$'s neighbors under the assumption that the wireless network is delay-free, i.e.,
\begin{equation}\label{whl6-eq:6}
\begin{aligned}
u_i(t)&=\kappa_i(x_i(t),x_{-i}(t))\\
&=K\sum_{j\in\mathcal{N}_i}a_{ij}\big(x_i(t)-x_j(t)\big)+c_i(t),
\end{aligned}
\end{equation}
where $K\in\mathbb{R}^{m\times n}$, $\varkappa_i(t):=-K\sum_{j\in\mathcal{N}_i}a_{ij}\big(x_i(t)-x_j(t)\big)$ is the predesigned consensus input and $c_i(t)$ is the control decision variable of the robust DMPC optimization problem that will be designed in Section~\ref{whl6-sec:4-2}.

\begin{rem}
Note that the proposed consensus protocol \eqref{whl6-eq:6} is motivated by the well-known ``pre-stabilizing'' control method for the stabilization problem of the single system in \cite{chisci2001systems,goulart2006optimization}, in which the MPC control input is given by $u(t)=Kx(t)+c(t)$. The feedback gain $K$ is chosen offline for the unconstrained system to achieve desired properties such as LQR optimality and stability, and $c(t)$ is calculated online. In this work, an inverse optimal consensus protocol $\varkappa_i(t)$ is designed offline for the unconstrained MAS in \eqref{whl6-eq:2} and $c_i(t)$ is determined via solving the online optimization problem. The resulting consensus protocol achieves suboptimal performance while guaranteeing the satisfaction of control input constraints. In contrast to the existing DMPC methods in \cite{li2016distributed,zhan2018distributed,gao2017distributed}, the knowledge of the communication topology is explicitly exploited to construct the predesigned consensus protocol, which facilitates the consensus convergence analysis of the constrained MAS.
\end{rem}

\subsection{The predesigned consensus protocol for the MAS with delays}
\label{whl6-sec:4-2}
We now present the predesigned consensus protocol for the MAS with time-varying communication delays. Based on the assumed predicted states of neighbors, the consensus protocol in \eqref{whl6-eq:6} then becomes
\begin{equation}\label{whl6-eq:12}
u_i(t)=K\sum_{j\in\mathcal{N}_i}a_{ij}\big(x_i(t)-\hat{x}_j(t)\big)+c_i(t),
\end{equation}
and the MAS in \eqref{whl6-eq:2} under the consensus protocol in \eqref{whl6-eq:12} can be written as
\begin{equation}\label{whl6-eq:13}
\begin{aligned}
&x_i(t+1)\\
=&Ax_i(t)+BK\sum_{j\in\mathcal{N}_i}a_{ij}\big(x_i(t)-\hat{x}_j(t)\big)+Bc_i(t).
\end{aligned}
\end{equation}

The estimation errors of agent $i$, $i\in\mathcal{V}$ are defined as $w_i(t)=\sum_{j\in\mathcal{N}_i}a_{ij}\big(\hat{x}_j(t)-{x}_j(t)\big)$ and satisfy $w_i(t)\in\mathcal{W}:=\{w\in\mathbb{R}^n\mid w\in\bigoplus_{j\in\mathcal{N}_i}a_{ij}\Delta\}$. In particular, the estimation error will be treated as the disturbance in the following. Due to $\sum_{j\in\mathcal{N}_i}a_{ij}=1$, it holds that $\mathcal{W}=\Delta$.

Under the consensus protocol \eqref{whl6-eq:12}, the closed-loop system in \eqref{whl6-eq:13} becomes 
\begin{equation*}%\label{whl6-eq:14}
\begin{aligned}
&x_i(t+1)\\
=&Ax_i(t)+B\big(K\sum_{j\in\mathcal{N}_i}a_{ij}\big(x_i(t)-{x}_j(t)\big)+c_i(t)\big)-BKw_i(t).
\end{aligned}
\end{equation*}
\begin{rem}
Existing DMPC-based consensus methods (e.g., \cite{li2016distributed,zhan2018distributed}) do not consider the estimation errors explicitly, which may not be applicable for practical MAS. Because the estimation errors may prevent MAS from achieving consensus, inspired by the tube-based MPC for the uncertain linear system in \cite{mayne2005robust}, one might choose to impose the estimation error set to bound the estimation errors that are regarded as external disturbances in this work. 
\end{rem}

\subsection{Design of robust DMPC for consensus}
\label{whl6-sec:4-3}
In this work, the terminal set for the MAS in \eqref{whl6-eq:4} is defined as $\mathbf{X}^f:=\{\mathbf{x}\in\mathbb{R}^{Mn}\mid\|\mathbf{x}\|_\mathbb{S}\leq\epsilon^2\}$, in which $\mathbb{S}=\sigma\mathcal{L}\otimes S$, with the positive semi-definite weighting matrix $S\in\mathbb{R}^{n\times n}$ and the positive constants $\epsilon$ and $\sigma$. The matrix $S$ can be chosen following the method provided in Lemmas $6$ and $7$ in \cite{li2018receding}. The terminal set $\mathcal{X}_i^f$ for agent $i$, $i\in\mathcal{V}$ is given by
\begin{equation}
{\mathcal{X}_i^f:=\{x_i\in\mathbb{R}^n\mid \sum_{j\in\mathcal{N}_i}a_{ij}x_i^{\text{T}}S(x_i-{x}_j)\leq\epsilon^2/M\}}.
\end{equation}

The cost function $J_i(\cdot)$ for agent $i$, $i\in\mathcal{V}$ is defined by 
\begin{equation}\label{whl6-eq:16}
J_i(\bm{c}_i(t)):=\sum_{k=0}^{N-1}\|{c}_i(t+k|t)\|^2_{P_i},
\end{equation}
where $\bm{c}_i(t):=\{{c}_i(t|t),{c}_i(t+1|t),\dots,{c}_i(t+N-1|t)\}$ denotes the control sequence generated at time instant $t$ and the matrix $P_i$ is positive definite. Note that the suboptimal performance can be guaranteed by minimizing the cost function in \eqref{whl6-eq:16} while ensuring the satisfaction of control input constraints.

At time $t$, given the current state $x_i(t)$ of agent $i$, $i\in \mathcal{V}$ and neighbors' assumed predicted state sequence $\hat{\bm{x}}_j(t)$, $j\in\mathcal{N}_i$, the robust DMPC optimization problem $\mathcal{P}_i$ is formulated as
\begin{fleqn}
\begin{subequations}\label{whl6-eq:17}
\begin{alignat}{2}
\min_{\bm{c}_i(t)}\ & {J}_{i}(\bm{c}_i(t)) \notag\\
\text{s.t.}\ & x_i(t|t)=x_i(t),\label{whl6-eq:17a}\\
&x_i(t+k+1|t)=Ax_i(t+k|t)+Bu_i(t+k|t),\hskip -0.5em\\
&u_i(t+k|t)=\hat{\varkappa}_i(t+k|t)+c_i(t+k|t),\\
&{{u}}_i(t+k|t)\in {\mathcal{U}}_i,\label{whl6-eq:17d}\\
&x_i(t+k|t)\in\hat{x}_i(t+k|t)\oplus \Delta, \label{whl6-eq:17e}\\
&x_i(t+N|t)\in\mathcal{X}_i^f,\label{whl6-eq:17f}
\end{alignat}
\end{subequations}
\end{fleqn}
in which $k\in \mathbb{N}_{[0,N-1]}$ and $\hat{\varkappa}_i(t+k|t):=K\sum_{j\in\mathcal{N}_i}a_{ij}\big({x}_i(t+k|t)-\hat{x}_j(t+k|t)\big)$. Note that the constraint \eqref{whl6-eq:17e} bounds the optimal predicted state sequence $x_i^*(t+k|t)$ within a predesigned tube centered along the assumed state $\hat{x}_i(t + k|t)$, $k\in\mathbb{N}_{[0,N-1]}$. Let $\bm{c}_i^*(t):=\{c_i^*(t|t),\dots,c_i^*(t+N-1|t)\}$ be the optimal solution to the robust DMPC problem $\mathcal{P}_i$ at time $t$. We then have the corresponding optimal control input 
\begin{equation}\label{whl6-eq:20}
\begin{aligned}
&{u}_i^*(t+k|t)\\
=&K\sum_{j\in\mathcal{N}_i}a_{ij}\big({x}_i^*(t+k|t)-\hat{x}_j(t+k|t)\big)+c_i^*(t+k|t),
\end{aligned}
\end{equation}
where $k\in\mathbb{N}_{[0,N-1]}$, and the optimal control sequence at time $t$ is $\bm{u}_i^*(t):=\{{u}_i^*(t|t),u_i^*(t+1|t),\dots,u_i(t+N-1|t)\}$. The resultant optimal predicted state satisfies
\begin{equation}\label{whl6-eq:21}
x_i^*(t+k+1|t)=Ax_i^*(t+k|t)+Bu_i^*(t+k|t),
\end{equation}
where $k\in\mathbb{N}_{[0,N-1]}$, $x_i^*(t|t)=x_i(t)$, and the optimal predicted state sequence is denoted by $\bm{x}_i^*(t):=\{x_i^*(t|t),x_i^*(t+1|t),\dots,x_i^*(t+N|t)\}$.  

At time $t$, applying $u_i^*(t|t)$ to the system in \eqref{whl6-eq:2} yields the following closed-loop system
\begin{equation}\label{whl6-eq:22}
\begin{aligned}
&x_i(t+1)\\
=&Ax_i(t)+Bu_i^*(t|t)\\
=&Ax_i(t)+B\big(K\sum_{j\in\mathcal{N}_i}a_{ij}\big({x}_i^*(t|t)-\hat{x}_j(t|t)\big)+c_i^*(t|t)\big).
\end{aligned}
\end{equation}
And according to \eqref{whl6-eq:21}, one has $x_i(t+1)=x_i^*(t+1|t)$.

\begin{rem}
Note that many DMPC algorithms have been developed for the formation stabilization control problem of the MAS, e.g., \cite{dunbar2006distributed}. However, they cannot be easily extended to consensus problems since the optimal cost function may not be directly used as Lyapunov function \cite{ferrari2009model,muller2012cooperative}. In this case, the major difficulty when analyzing the consensus convergence of the MAS is designing some specific conditions such that the Lyapunov theorem holds \cite{li2016distributed,zhan2018distributed}. Alternatively, this paper exploits the knowledge of the communication topology to design an offline optimal consensus protocol $\sum_{j\in\mathcal{N}_i}a_{ij}K(x_i-x_j)$ for the unconstrained MAS. Then the difference $c_i$ between the unconstrained control input $\hat{\varkappa}_i$ and the control input $u_i$ is minimized, which will be proved to be a vanishing input in \textbf{Lemma \ref{whl6-lem:3}}. 
\end{rem}
% \subsection{Robust DMPC-based consensus algorithm}
% \label{whl6-sec:4-4}
The proposed robust DMPC-based consensus algorithm is presented as follows. 

\begin{algorithm}[H]
\caption{Robust DMPC-based consensus algorithm}\label{whl6-alg:1}
\begin{algorithmic}[1]
\Require The prediction horizon $N$, the predesigned consensus gain $K$, the terminal set $\mathcal{X}_i^f$, the estimation error set $\Delta$, the weighting matrix $P_i$, the initial state $x_i(0)$, the assumed state sequence $\hat{\bm{x}}_i(0):=\{x_i(0),\dots,A^{N-1}x_i(0)\}$, and other parameters. 

{\hspace{-3.8em}\textbf{Online Procedure:}}
\State Broadcast the assumed predicted state sequence $\hat{\bm{x}}_i(0)$ to its neighbors $j$, $j\in\mathcal{N}_i$;\label{step1} 
\While{for agent $i$, the control is not stopped}
\State Measure the current system state $x_i(t)$; 
\State Receive the information $\hat{\bm{x}}_j(t)$, $j\in \mathcal{N}_i$ defined in \eqref{whl6-eq:9};\label{step2}
\State {Solve the problem $\mathcal{P}_i$ to generate $\bm{u}_{i}^{*}(t)$ and $\bm{x}_i^*(t)$;} \label{step3}  
\State Broadcast $\hat{\bm{x}}_i(t)$ to agent $j$, $j\in\mathcal{N}_i$;
\State Apply the control input ${u}_i^*(t|t)$ to agent $i$; 
\State ${t}={t}+1$;
\EndWhile\label{endwhile}
\end{algorithmic}
\end{algorithm}

Note that in Step $4$ of \textbf{Algorithm \ref{whl6-alg:1}}, agent $i$ collects the assumed predicted state sequence from its neighbors $j$, $j\in\mathcal{N}_i$. The predicted state sequences of neighbors broadcast at the common time instant $t'$, $t'\in\mathbb{N}_{[t-\bar{\tau},t-1]}$ as in \eqref{whl6-eq:9}, instead of the most recently received predicted state sequences, will be used in Step $5$ to generate the control input $\bm{u}^*_i(t)$. 

\section{Theoretical analysis}
\label{whl6-sec:5}
This section provides the theoretical analysis of the recursive feasibility and the consensus convergence of the MAS. Before presenting the main results, the relationship between the theoretical results is illustrated in Fig.~\ref{whl6-fig:3a}.
\begin{figure}[!ht]
\centering
\includegraphics[width=0.98\columnwidth]{./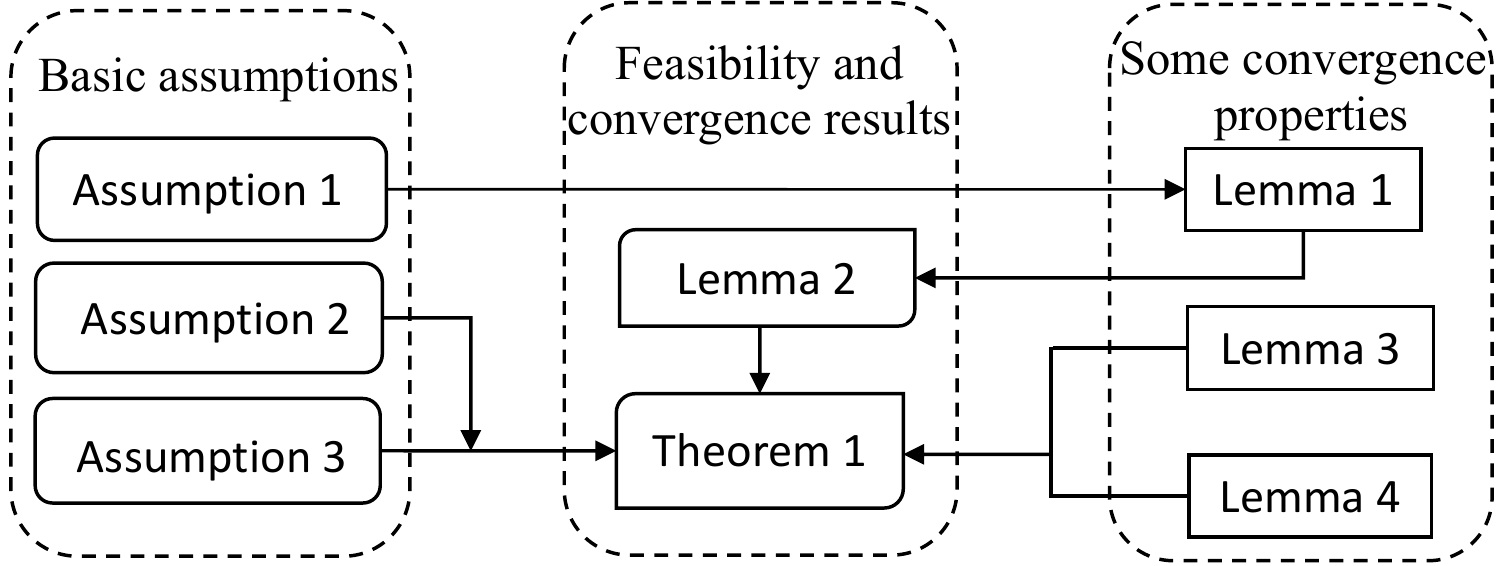}
\caption{The roadmap of the theoretical results.}
\label{whl6-fig:3a}
\end{figure}

\subsection{Recursive feasibility result}
\label{whl6-sec:5-1}
This subsection discusses the recursive feasibility of the robust DMPC optimization problem. It is assumed that an initial feasible solution to the optimization problem $\mathcal{P}_i$ exists with the given initial state $x_i(t)$, $t\in\mathbb{N}_{\geq 0}$. 

Based on the optimal predicted control sequence at time $t''$, $t''\in\mathbb{N}_{[t+1-\bar{\tau},t]}$, a candidate input sequence $\tilde{\bm{c}}_i(t+1)$ at time $t+1$ is created by dropping the first $t-t''$ input element and appending final $t-t''$ terminal zero element of the optimal control sequence at $t''$, i.e.,
\begin{equation}\label{whl6-eq:23}
\tilde{c}_i(t+k|t+1)=\begin{dcases}
c_i^*(t+k|t''),& k\in\mathbb{N}_{[1,t''+N-t)},\\
0,& k\in\mathbb{N}_{[t''+N-t,N]},
\end{dcases}
\end{equation}
and the corresponding control input sequence $\tilde{\bm{u}}_i(t+1):=\{\tilde{u}_i(t+1|t+1),\tilde{u}_i(t+2|t+1),\dots,\tilde{u}_i(t+1+N-1|t+1)\}$ for agent $i$ is then constructed as
\begin{equation}\label{whl6-eq:24}
\begin{aligned}
&\tilde{u}_i(t+k|t+1)\\
=&K\sum_{j\in\mathcal{N}_i}a_{ij}\big(\tilde{x}_i(t+k|t+1)-\hat{x}_j(t+k|t+1)\big)\\
&+\tilde{c}_i(t+k|t+1),
\end{aligned}
\end{equation}
where $k\in\mathbb{N}_{[1,N]}$ and the corresponding system state $\tilde{x}_i(t+1+k|t+1)$ satisfies the following difference equation
\begin{equation}\label{whl6-eq:25}
\begin{aligned}
&\tilde{x}_i(t+1+k+1|t+1)\\
=&A\tilde{x}_i(t+1+k|t+1)+B\tilde{u}_i(t+1+k|t+1),
\end{aligned}
\end{equation}
with the initial condition $\tilde{x}_i(t+1|t+1)=x_i(t+1)$.

In the following lemma, we provide the conditions under which the control input sequence defined in \eqref{whl6-eq:23} and \eqref{whl6-eq:24} will be a feasible solution to the robust DMPC optimization problem $\mathcal{P}_i$ at time $t+1$.

\begin{lem}\label{whl6-lem:2}
For the MAS in \eqref{whl6-eq:2}, suppose \textbf{Assumptions~\ref{whl6-asm:1}, \ref{whl6-asm:2}, \ref{whl6-asm:3}} and the condition in \textbf{Lemma~\ref{whl6-lem:1}} hold. If there exist the predesigned consensus gain $K$, the maximum delay $\bar{\tau}$ and the prediction horizon $N$, such that the conditions
\begin{equation}\label{whl6-eq:21a}
\begin{aligned}
&\max_{k\in\mathbb{N}_{[1,N'-1]}}\big\{\rho(\sum_{s=0}^{k-1}A_K^{k-1-s}BK+A_K^k)\big\}\leq 1,\ \text{and}\\
&\max_{k\in\mathbb{N}_{[N',N-1]}}\big\{\rho(\sum_{s=0}^{N'-1}A_K^{k-1-s}BK+A_K^k)\big\}\leq 1,
\end{aligned}
\end{equation} 
hold, with $\rho(\cdot)$ denoting the spectral radius, $A_K:=A+BK$ and $N':=N-\tau(t'')$. If, in addition, the robust DMPC optimization problem $\mathcal{P}_i$ has a feasible solution at time $t$, then it admits a feasible solution at time $t+1$, $t\in\mathbb{N}_{\geq 0}$.
\end{lem}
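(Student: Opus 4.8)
The plan is to show that the shifted candidate sequence $\tilde{\bm{c}}_i(t+1)$ defined in \eqref{whl6-eq:23} is admissible for the problem $\mathcal{P}_i$ at time $t+1$, i.e. that it satisfies every constraint of $\mathcal{P}_i$. The initial condition $\tilde{x}_i(t+1|t+1)=x_i(t+1)$, the prediction dynamics \eqref{whl6-eq:25}, and the input parametrization \eqref{whl6-eq:24} are built into the construction, so those rows hold automatically. The substance therefore reduces to the tube constraint \eqref{whl6-eq:17e}, the input constraint \eqref{whl6-eq:17d}, and the terminal constraint \eqref{whl6-eq:17f}. I would establish the tube constraint first, since the other two lean on the bound it provides.

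For the tube constraint, I would introduce the tracking error $e_i(t+k):=\tilde{x}_i(t+k|t+1)-\hat{x}_i(t+k|t+1)$ and split the horizon at $N'=N-\tau(t'')$ into the \emph{reused} block $k\in\mathbb{N}_{[1,N']}$, where $\tilde{c}_i=c_i^*(\cdot|t'')$, and the \emph{appended} block $k\in\mathbb{N}_{[N'+1,N]}$, where $\tilde{c}_i=0$. Subtracting the $\hat{x}_i$-dynamics from the $\tilde{x}_i$-dynamics and using $\sum_{j\in\mathcal{N}_i}a_{ij}=1$ to collapse the state-feedback part into $A_Ke_i$, I obtain $e_i(t+k+1)=A_Ke_i(t+k)-BK\,\xi_i(t+k)$ on the reused block, where $\xi_i(t+k):=\sum_{j\in\mathcal{N}_i}a_{ij}\big(\hat{x}_j(t+k|t+1)-\hat{x}_j(t+k|t'')\big)$ is the mismatch between the neighbor estimates used at $t+1$ and at $t''$, and $e_i(t+k+1)=A_Ke_i(t+k)$ on the appended block (no new mismatch enters there). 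By \eqref{whl6-eq:10} applied to each neighbor $j$ the mismatch lands in $\Delta$, and $\sum_{j\in\mathcal{N}_i}a_{ij}\Delta=\Delta$, so $\xi_i(t+k)\in\Delta$. Solving the recursion then writes $e_i(t+k)$ as the image of $\Delta$-bounded vectors under $\sum_{s=0}^{k-1}A_K^{k-1-s}BK+A_K^{k}$ on the reused block and under $\sum_{s=0}^{N'-1}A_K^{k-1-s}BK+A_K^{k}$ on the appended block, the count of active disturbance terms saturating at $N'$. The conditions \eqref{whl6-eq:21a} are exactly what I need to conclude $\|e_i(t+k)\|\leq\eta$, i.e. $e_i(t+k)\in\Delta$, which is \eqref{whl6-eq:17e}.

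With the tube bound in hand, the remaining two constraints follow. For the input constraint \eqref{whl6-eq:17d}, on the appended block $\tilde{u}_i$ is the pure consensus feedback, which lies in $\mathcal{U}_i$ by part 2) of \textbf{Assumption~\ref{whl6-asm:2}} once the predicted states have entered the terminal region, while on the reused block $\tilde{u}_i$ differs from the already-feasible input $u_i^*(\cdot|t'')$ only by $K$ times the bounded error $e_i$ and the mismatch $\xi_i$, so admissibility is inherited. For the terminal constraint \eqref{whl6-eq:17f}, the terminal state of the $t''$-optimal solution satisfies $x_i^*(t''+N|t'')=x_i^*(t+1+N'|t'')\in\mathcal{X}_i^f$ by feasibility at $t''$; the candidate then runs $\tau(t'')=N-N'$ further steps of consensus feedback, and the forward invariance of $\mathcal{X}_i^f$ under this feedback, guaranteed by \textbf{Assumption~\ref{whl6-asm:2}}, the condition $\rho(A+\lambda_iBK)<1$ of \textbf{Lemma~\ref{whl6-lem:1}}, and the construction of $S$ and $\mathbf{X}^f$, together with the appended-block estimate from \eqref{whl6-eq:21a}, keeps $\tilde{x}_i(t+1+N|t+1)$ inside $\mathcal{X}_i^f$.

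Collecting these facts, $\tilde{\bm{c}}_i(t+1)$ meets all constraints of $\mathcal{P}_i$ and is feasible at $t+1$, which establishes recursive feasibility. I expect the decisive and most delicate step to be the tube bound of the second paragraph: one must carefully track which broadcast instants generate the assumed states at $t''$ and at $t+1$, certify via \eqref{whl6-eq:10} that each neighbor-estimate mismatch genuinely lies in $\Delta$, and manage the transition between the reused and appended blocks so that the accumulated error matches precisely the matrices appearing in \eqref{whl6-eq:21a}. The invariance of the terminal set under the delayed, tube-perturbed consensus feedback is the secondary obstacle.
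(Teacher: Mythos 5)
Your overall route is the paper's: you build the candidate at $t+1$ by shifting the $t''$-optimal sequence per \eqref{whl6-eq:23}--\eqref{whl6-eq:24}, derive an error recursion relative to the assumed trajectory $\hat{x}_i(\cdot|t+1)=x_i^*(\cdot|t'')$, observe that fresh estimation mismatches stop entering after step $N'$ so the error propagates by $A_K$ alone on the appended block, and identify the accumulated-error matrices $\sum_{s=0}^{k-1}A_K^{k-1-s}BK+A_K^{k}$ (with the sum saturating at $N'-1$ beyond the reused block) as exactly the objects constrained in \eqref{whl6-eq:21a}. Up to sign conventions, your $e_i$ and $\xi_i$ are the paper's $r_i^k(t+1)$ and $w_i(\cdot|t'')$, and your tube argument coincides with the paper's derivation of $\mathcal{R}_i^k\subseteq\Delta$, at the same level of rigor (both pass from the spectral-radius condition \eqref{whl6-eq:21a} to the set inclusion without an operator-norm bound, as the paper itself does).

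The genuine gap is your input-constraint step on the reused block. You assert that since $\tilde{u}_i(t+k|t+1)$ differs from the feasible $u_i^*(t+k|t'')$ only by $K$ times bounded quantities, ``admissibility is inherited.'' That inference fails for a hard constraint set: feasibility at $t''$ gives only $u_i^*(t+k|t'')\in\mathcal{U}_i$, with no margin, and if $u_i^*(t+k|t'')$ lies on the boundary of $\mathcal{U}_i$ then an arbitrarily small nonzero perturbation $K\big(e_i+\xi_i\big)$ exits the set. The paper closes this by rewriting $\tilde{u}_i(t+k|t+1)=\bar{u}_i^*(t+k|t'')+Kr_i^{k-1}(t+1)$, where $\bar{u}_i^*$ is the nominal input formed with the neighbors' \emph{actual} optimal states, and by placing $\bar{u}_i^*$ in the tightened set $\overline{\mathcal{U}}_i:=\mathcal{U}_i\ominus K\Delta$, so that adding $Kr_i^{k-1}\in K\Delta$ lands back in $\mathcal{U}_i$; this tightening is the mechanism your sketch omits. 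The same issue recurs in your terminal step: invariance of $\mathcal{X}_i^f$ under the consensus feedback does not by itself absorb the residual $A_Kr_i^{N-1}(t+1)$, and the paper again uses a tightened set, $\overline{\mathcal{X}}_i^f:=\mathcal{X}_i^f\ominus\Delta$ (in stacked form $\overline{\mathbf{X}}^f=\mathbf{X}^f\ominus\mathbf{\Delta}$), to conclude $\tilde{x}_i(t+1+N|t+1)\in\overline{\mathcal{X}}_i^f\oplus A_K\Delta\subseteq\mathcal{X}_i^f$. Without introducing these tightened input and terminal sets, or an equivalent explicit margin, your verification of \eqref{whl6-eq:17d} and \eqref{whl6-eq:17f} does not go through; the rest of your proposal matches the paper's proof.
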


\begin{proof}
{An initial feasible solution to the optimization problem $\mathcal{P}_i$, $i\in\mathcal{V}$ is assumed to exist at time $t$, $t\in\mathbb{N}_{\geq 0}$ implying that the constraint $x_i^*(t+k|t)\in\hat{x}_i(t+k|t)\oplus \Delta$ in \eqref{whl6-eq:17e} holds, i.e., $x_i^*(t+k|t)\in{x}_i^*(t+k|t')\oplus \Delta$, with ${x}_i^*(t+k+1|t')=A{x}_i^*(t+k|t')+B{u}_i^*(t+k|t')$ and $t'\in\mathbb{N}_{[t-\bar{\tau},t-1]}$.
 
Note that the previous optimal predicted states $x_j^*(\cdot|t'')$ of agent $j$, $j\in\mathcal{N}_i$ are available for agent $i$, $i\in\mathcal{V}$ at time $t+1$, with $t''\in\mathbb{N}_{[t+1-\bar{\tau},t]}$. By \eqref{whl6-eq:9}, one gets $\hat{x}_j(t+1|t+1)=x_j^*(t+1|t'')$.} Thus, from $\tilde{x}_i(t+1|t+1)=x_i^*(t+1|t)$ and \eqref{whl6-eq:23}, we have
\begin{equation*}\label{whl6-eq:26}
\begin{aligned}
&\tilde{u}_i(t+1|t+1)\\
=&K\sum_{j\in\mathcal{N}_i}a_{ij}\big(\tilde{x}_i(t+1|t+1)-\hat{x}_j(t+1|t+1)\big)\\
&+\tilde{c}_i(t+1|t+1)\\
=&K\sum_{j\in\mathcal{N}_i}a_{ij}\big(\tilde{x}_i(t+1|t+1)-\hat{x}_j(t+1|t'')\\
&+\hat{x}_j(t+1|t'')-\hat{x}_j(t+1|t+1)\big)+\tilde{c}_i(t+1|t+1)\\
\end{aligned}
\end{equation*}
\begin{equation*}
\begin{aligned}
=&K\sum_{j\in\mathcal{N}_i}a_{ij}\big(x_i^*(t+1|t)-\hat{x}_j(t+1|t'')\big)+c_i^*(t+1|t'')\\
&+K\sum_{j\in\mathcal{N}_i}a_{ij}\big(\hat{x}_j(t+1|t'')-\hat{x}_j(t+1|t+1)\big)\\
=&K\sum_{j\in\mathcal{N}_i}a_{ij}\big(x_i^*(t+1|t'')-x_i^*(t+1|t'')+x_i^*(t+1|t)\\
&-\hat{x}_j(t+1|t'')\big)+c_i^*(t+1|t'')\\
&+K\sum_{j\in\mathcal{N}_i}a_{ij}\big(\hat{x}_j(t+1|t'')-\hat{x}_j(t+1|t+1)\big)\\
=&u_i^*(t+1|t'')+K\sum_{j\in\mathcal{N}_i}a_{ij}\big(\hat{x}_j(t+1|t'')-{x}_j^*(t+1|t''))\big)\\
&+K\sum_{j\in\mathcal{N}_i}a_{ij}\big(x_i^*(t+1|t)-x_i^*(t+1|t'')\big).
\end{aligned}
\end{equation*}

Since the constraint $x_i^*(t+k|t)\in{x}_i^*(t+k|t')\oplus \Delta$ holds, one then has $x_i^*(t+k|t)\in{x}_i^*(t+k|t'')\oplus \Delta$. Let $w_i'(t+1|t):=x_i^*(t+1|t)-\hat{x}_i(t+1|t)=x_i^*(t+1|t)-x_i^*(t+1|t'')$. From the definition of $w_i(t+k|t)=\sum_{j\in\mathcal{N}_i}a_{ij}\big(\hat{x}_j(t+k|t)-{x}_j^*(t+k|t)\big)$, we obtain
\begin{equation*}\label{whl6-eq:27}
\tilde{u}_i(t+1|t+1)=u_i^*(t+1|t'')+Kw_i'(t+1|t)+Kw_i(t+1|t'').
\end{equation*}

Applying the above control input, we have the predicted system state $\tilde{x}_i(t+2|t+1)$ as follows
\begin{equation}\label{whl6-eq:28}
\begin{aligned}
&\tilde{x}_i(t+2|t+1)\\
=&A\tilde{x}_i(t+1|t+1)+B\tilde{u}_i(t+1|t+1)\\
=&x_i^*(t+2|t'')+A_Kw_i'(t+1|t)+BKw_i(t+1|t'').
\end{aligned}
\end{equation}

Analogously, we have 
\begin{equation*}
\begin{aligned}
&\tilde{u}_i(t+2|t+1)\\
=&K\sum_{j\in\mathcal{N}_i}a_{ij}\big(\tilde{x}_i(t+2|t+1)-\hat{x}_j(t+2|t+1)\big)\\
&+\tilde{c}_i(t+2|t+1)\\
=&u_i^*(t+2|t'')+KBKw_i(t+1|t'')+Kw_i(t+2|t'')\\
&+KA_Kw_i'(t+1|t),
\end{aligned}
\end{equation*}
and
\begin{equation*}
\begin{aligned}
&\tilde{x}_i(t+3|t+1)\\
=&A\tilde{x}_i(t+2|t+1)+B\tilde{u}_i(t+2|t+1)\\
=&x_i^*(t+3|t'')+A_KBKw_i(t+1|t'')+BKw_i(t+2|t'')\\
&+A_K^2w_i'(t+1|t).
\end{aligned}
\end{equation*}

Then, the state sequence candidate $\tilde{x}_i(t+k|t+1)$ evolves in an iterative way according to 
\begin{equation}\label{whl6-eq:29}
\begin{aligned}
&\tilde{x}_i(t+1+k|t+1)\\
=&x_i^*(t+1+k|t'')+A_K^kw_i'(t+1|t)\\
&+\sum_{s=0}^{k-1}A_K^{k-1-s}BKw_i(t+1+s|t''),
\end{aligned}
\end{equation}
with $k\in\mathbb{N}_{[1,N')}$. 

Define the variable $r_i^k(t+1)\in\mathbb{R}^n$ for agent $i$, $i\in\mathcal{V}$ by
\begin{equation*}\label{whl6-eq:31}
\begin{aligned}
&{r}_i^k(t+1):=\\
&\begin{dcases}
w_i'(t+1|t),&k=0,\\
\sum_{s=0}^{k-1}A_K^{k-1-s}BKw_i(s)+A_K^kw_i'(t+1|t),&  k\in\mathbb{N}_{[1,N')},\\
\sum_{s=0}^{N'-1}A_K^{k-1-s}BKw_i(s)+A_K^kw_i'(t+1|t),&k\in\mathbb{N}_{[N',N)},
\end{dcases}
\end{aligned}
\end{equation*}
in which $w_i(s):=w_i(t+1+s|t'')$.

Considering $w_i(t+k|t'')\in\mathcal{W}$, $k\in\mathbb{N}_{[1,N']}$ and $w_i'(t+1|t)\in\mathcal{W}$, one gets the corresponding set $\mathcal{R}_i^k$
\begin{equation}\label{whl6-eq:32}
\mathcal{R}_i^k\
:=\begin{dcases}
\mathcal{W},&k=0,\\
\bigoplus_{s=0}^{k-1}A_K^{k-1-s}BK\mathcal{W}+A_K^k\mathcal{W},& k\in\mathbb{N}_{[1,N')},\\
\bigoplus_{s=0}^{N'-1}A_K^{k-1-s}BK\mathcal{W}+A_K^k\mathcal{W},& k\in\mathbb{N}_{[N',N)},
\end{dcases}
\end{equation}
with $r_i^k(t+1)\in\mathcal{R}_i^k$. 

Since the condition $\max_{k\in\mathbb{N}_{[1,N-1]}}\{\rho(\sum_{s=0}^{N'-1}A_K^{k-1-s}BK+A_K^k),\rho(\sum_{s=0}^{k-1}A_K^{k-1-s}BK+A_K^k)\}\leq 1$ holds, and by \eqref{whl6-eq:29} and \eqref{whl6-eq:32}, we get $\mathcal{R}_i^{k}\subseteq\Delta$, $k\in\mathbb{N}_{[1,N)}$. Hence, the constraint \eqref{whl6-eq:17e} at $t+1$ is satisfied.

Let $\bar{u}_i^*(t+k|t''):=K\sum_{j\in\mathcal{N}_i}a_{ij}\big(x_i^*(t+k|t'')-x_j^*(t+k|t'')\big)+c_i^*(t+k|t'')$ be the nominal optimal control input with $k\in\mathbb{N}_{[1,N'-1]}$. Note that all admissible disturbances satisfy $\forall w_i(t+k|t'')\in\Delta$. This implies that the nominal optimal control input satisfies $\bar{u}_i^*(t+k|t'')\in\overline{\mathcal{U}}_i$, where $\overline{\mathcal{U}}_i:=\mathcal{U}_i\ominus K\Delta$. It should be mentioned that the nominal control inputs $\bar{u}_i^*(\cdot|t'')$ are not implemented to the actual system, since $x_j^*(t+k|t'')$, $j\in\mathcal{N}_i$ is not available for agent $i$, $i\in\mathcal{V}$ at time $t''$. Next, substituting \eqref{whl6-eq:29} into \eqref{whl6-eq:24} yields
\begin{equation}
\tilde{u}_i(t+k|t+1)=\bar{u}_i^*(t+k|t'')+Kr_i^{k-1}(t+1),
\end{equation}
where $k\in\mathbb{N}_{[1,N']}$. 

Since the optimal control input sequence $\bm{u}_i^*(t'')$ is assumed to be feasible at time $t''$, we have $\bar{u}_i^*(t+k|t'')\in\overline{\mathcal{U}}_i$, $k\in\mathbb{N}_{[1,N']}$. Additionally, $r_i^{k-1}(t+1)\in\Delta$. When $k\in\mathbb{N}_{(N',N]}$, it follows from Assumption \ref{whl6-asm:2} that $\tilde{u}_i(t+k|t+1) \in\mathcal{U}_i$. Hence, at time $t+1$, from \eqref{whl6-eq:23} and \eqref{whl6-eq:24}, we further obtain that
\begin{equation}
\tilde{u}_i(t+k|t+1)\in \mathcal{U}_i,
\end{equation}
where $k\in\mathbb{N}_{[1,N]}$. This implies that the control input constraint in \eqref{whl6-eq:17d} holds.

Given the initial state $\tilde{x}_i(t+1|t+1)=x_i(t+1)$ and the control inputs defined in \eqref{whl6-eq:24}, one gets
\begin{equation}\label{whl6-eq:35}
\begin{aligned}
&\tilde{x}_i(t+1+N|t+1)\\
=&A\tilde{x}_i(t+N|t+1)+B\tilde{u}_i(t+N|t+1)\\
=&A\tilde{x}_i(t+N|t+1)\\
&+BK\sum_{j\in\mathcal{N}_i}a_{ij}\big(\tilde{x}_i(t+N|t+1)-\hat{x}_j(t+N|t+1)\big)\\
=&A\big({x}_i^*(t+N|t'')+r_i^{N-1}(t+1)\big)+BK\sum_{j\in\mathcal{N}_i}a_{ij}\\
&\big({x}_i^*(t+N|t'')+r_i^{N-1}(t+1)-{x}_j^*(t+N|t'')\big)\\
=&A{x}_i^*(t+N|t'')+A_Kr_i^{N-1}(t+1)\\
&+BK\sum_{j\in\mathcal{N}_i}a_{ij}\big({x}_i^*(t+N|t'')-{x}_j^*(t+N|t'')\big),
\end{aligned}
\end{equation}
in which $x_i^*(t+1+k|t''):=Ax_i^*(t+k|t'')+B{u}_i^*(t+k|t'')$, with $u_i^*(t+k|t''):=K\sum_{j\in\mathcal{N}_i}a_{ij}\big(x_i^*(t+k|t'')-x_j^*(t+k|t'')\big)$, $k\in\mathbb{N}_{[N',N]}$.

Let $\tilde{\mathbf{x}}(t+N|t+1):=\col(\tilde{x}_1(t+N|t+1),\dots,\tilde{x}_M(t+N|t+1))$ and $\mathbf{x}^*(t+N|t''):=\col({x}_1^*(t+N|t''),\dots,{x}_M^*(t+N|t''))$. Equipped with these notation, we rewrite \eqref{whl6-eq:35} in the following compact form
\begin{equation}\label{whl6-eq:36}
\begin{aligned}
&\tilde{\mathbf{x}}(t+1+N|t+1)\\
=&(I_M\otimes A+\mathcal{L}\otimes BK)\mathbf{x}^*(t+N|t'')\\
&+\big(I_M\otimes (A+ BK)\big)\mathbf{r}^{N-1}(t+1),
\end{aligned}
\end{equation}
where $\mathbf{r}^{N-1}(t+1)=\col(r_1^{N-1}(t+1),r_2^{N-1}(t+1),\dots,r_M^{N-1}(t+1))\in\mathbf{\Delta}$, with $\mathbf{\Delta}=\{\mathbf{r}^{N-1}\in\mathbb{R}^{Mn}\mid \|\mathbf{r}^{N-1}\|^2\leq M\eta^2\}$. Let the tightened terminal set be $\overline{\mathbf{X}}^f:=\mathbf{X}^f\ominus{\mathbf{\Delta}}$. When $\mathbf{x}^*(t+N|t'')\in\overline{\mathbf{X}}^f$, {Assumption}~\ref{whl6-asm:1} implies that $(I_M\otimes A+\mathcal{L}\otimes BK)\mathbf{x}^*(t+N|t'')\in\overline{\mathbf{X}}^f$. 

The corresponding tightened terminal set for agent $i$, $i\in\mathcal{V}$ becomes
\begin{equation}
\begin{aligned}
\overline{\mathcal{X}}_i^f:=&\mathcal{X}_i^f\ominus \Delta,
\end{aligned}
\end{equation}
with the constant $\eta>0$ and $\overline{\mathcal{X}}_i^f\neq \emptyset$.

 As a result, combining the above with \eqref{whl6-eq:35} and \eqref{whl6-eq:36}, one obtains
\begin{equation}
\begin{aligned}
\tilde{x}_i(t+1+N|t+1)&\in\overline{\mathcal{X}}_i^f+A_K\Delta\\
&\subseteq{\mathcal{X}}_i^f,
\end{aligned}
\end{equation}
which implies that the terminal set \eqref{whl6-eq:17f} is satisfied. By now, we have shown that $\tilde{\bm{c}}_i(t+1)$ is a feasible solution to the robust DMPC optimization problem $\mathcal{P}_i$ at time $t+1$.
\end{proof}
\begin{rem}
When the delay bound $\bar{\tau}$ increases, a larger prediction horizon $N$ is needed to construct the assumed predicted state sequence. As a result, the DMPC optimization problem becomes more computationally challenging. Additionally, it is hard to ensure the satisfaction of the condition \eqref{whl6-eq:21a}, which may lead to an infeasible solution. Therefore, one would choose suitable parameters to make a trade-off between the delay-tolerant capability and computational complexity.
\end{rem}

\begin{rem}
Note that the constraint in \eqref{whl6-eq:17e} must be satisfied by every optimized sequence $x_i^*(t+k|t)$ in the tube $\hat{x}_i(t+k|t)\oplus\Delta$, $t\in\mathbb{N}_{\geq 0}$. That is, the estimation errors take values in the specified set $\Delta$. From \eqref{whl6-eq:32}, a minimal robust invariant set $\mathcal{R}_i^\infty$ can be obtained following the method in \cite{mayne2005robust}. Also notice that $\Delta\subset\mathcal{R}_i^\infty$. Here we direct our attention to the prediction horizon $\mathbb{N}_{[0,N]}$, then replacing $\mathcal{R}_i^\infty$ with $\mathcal{R}_i^k$, $k\in\mathbb{N}_{[0,N]}$ yields a less conservative set. In particular, the predesigned consensus gain matrix $K$, the communication delay bound $\bar{\tau}$ and the prediction horizon $N$ should be suitably chosen such that the set $R_i^k$ satisfies $R_i^k\subseteq\Delta$. %It is natural to ask if the estimation error set $\Delta(t)$ can be time-varying. The answer is yes, and this set can be useful in more complex situation, such as the MAS with time-varying communication graph and the time-varying bounds on the estimation errors. 
\end{rem}
\subsection{Consensus analysis}
\label{whl6-sec:5-2}
In this subsection, we first provide two technical lemmas and then present the convergence analysis in \textbf{Theorem} \ref{whl6-thm:1}.

The following result on $c_i(t)$ is fundamental to the convergence analysis. 
\begin{lem}\label{whl6-lem:3}
Provided that the initial state ${x}_i(0)$ is feasible, the MAS in \eqref{whl6-eq:2} under the distributed consensus protocol ${u}_i^*(t|t)=K\sum_{j\in\mathcal{N}_i}a_{ij}\big(x_i^*(t|t)-\hat{x}_j(t|t)\big)+c_i^*(t|t)$, with $c_i(t)=c_i^*(t|t)$, satisfies the following property:
\begin{equation*}
\lim_{t\to\infty}c_i(t)=0.
\end{equation*}
\end{lem}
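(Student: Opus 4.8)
The plan is to use the optimal value of the robust DMPC problem as a Lyapunov-like function and show it forces the applied correction $c_i(t)=c_i^*(t|t)$ to vanish. Define $V_i(t):=J_i(\bm{c}_i^*(t))=\sum_{k=0}^{N-1}\|c_i^*(t+k|t)\|^2_{P_i}$, which is nonnegative and, crucially, satisfies $\|c_i(t)\|^2_{P_i}\le V_i(t)$ because the first summand already equals $\|c_i^*(t|t)\|^2_{P_i}$. First I would invoke \textbf{Lemma \ref{whl6-lem:2}}: the shifted-and-zero-padded sequence $\tilde{\bm{c}}_i(t+1)$ built in \eqref{whl6-eq:23} from the optimal sequence broadcast at $t''\in\mathbb{N}_{[t+1-\bar{\tau},t]}$ is a feasible solution of $\mathcal{P}_i$ at time $t+1$, so optimality gives $V_i(t+1)\le J_i(\tilde{\bm{c}}_i(t+1))$.

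The second step is to evaluate $J_i(\tilde{\bm{c}}_i(t+1))$. Since $J_i$ in \eqref{whl6-eq:16} depends only on the $c$-sequence (not on the states or neighbour estimates), and since $\tilde{\bm{c}}_i(t+1)$ is exactly $\bm{c}_i^*(t'')$ with its first $t-t''+1$ entries discarded and $t-t''+1$ terminal zeros appended, its cost is the tail of $V_i(t'')$:
\begin{equation*}
J_i(\tilde{\bm{c}}_i(t+1))=V_i(t'')-\sum_{m=0}^{t-t''}\|c_i^*(t''+m|t'')\|^2_{P_i}\le V_i(t'')-\|c_i(t'')\|^2_{P_i}.
\end{equation*}
Combining with the previous bound yields the descent inequality
\begin{equation*}
V_i(t+1)\le V_i(t'')-\|c_i(t'')\|^2_{P_i},\qquad t''\in\mathbb{N}_{[t+1-\bar{\tau},t]}.
\end{equation*}
In the delay-free case $t''=t$ this is the textbook relation $V_i(t+1)\le V_i(t)-\|c_i(t)\|^2_{P_i}$, which telescopes to $\sum_{t}\|c_i(t)\|^2_{P_i}\le V_i(0)<\infty$ and hence, via $\underline{\lambda}(P_i)\|c_i(t)\|^2\le\|c_i(t)\|^2_{P_i}$, gives $c_i(t)\to0$.

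The main obstacle is that, under time-varying delays, the decrease compares $V_i(t+1)$ with $V_i(t'')$ for a $t''$ ranging over a window of length $\bar{\tau}$ rather than with $V_i(t)$, so the direct telescoping and monotonicity break down. To handle this I would introduce the sliding-window maximum $W(t):=\max_{s\in\mathbb{N}_{[t-\bar{\tau}+1,t]}}V_i(s)$. Because $t''\in\mathbb{N}_{[t-\bar{\tau}+1,t]}$, the descent inequality gives $V_i(t+1)\le W(t)$, while every remaining term defining $W(t+1)$ is already bounded by $W(t)$; hence $W(t+1)\le W(t)$, so $W$ is non-increasing and bounded below by $0$ and converges to some $W^\star\ge0$. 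Feeding the realized decrease $\|c_i(t'')\|^2_{P_i}\le W(t)-V_i(t+1)$ back into this monotone window, and using $0\le V_i(t)\le W(t)$ together with $\|c_i(t)\|^2_{P_i}\le V_i(t)$, one targets $W^\star=0$, whence $V_i(t)\to0$ and therefore $\lim_{t\to\infty}c_i(t)=0$ by positive definiteness of $P_i$. Establishing $W^\star=0$ from the windowed descent — rather than merely boundedness of $V_i$ — is the delicate point, and is precisely where the recursive-feasibility conditions \eqref{whl6-eq:21a} of \textbf{Lemma \ref{whl6-lem:2}} are essential, since they guarantee that the candidate, and thus the descent inequality, is available at every time step.
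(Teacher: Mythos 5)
Your first half is essentially the paper's proof: the same value function $V_i(t)=J_i(\bm{c}_i^*(t))$, feasibility of the shifted-and-padded candidate supplied by \textbf{Lemma \ref{whl6-lem:2}}, the optimality bound $V_i(t+1)\le J_i(\tilde{\bm{c}}_i(t+1))$, and the observation $\|c_i(t)\|^2_{P_i}\le V_i(t)$. Where you diverge is in evaluating the candidate cost against $V_i(t'')$ and then trying to rescue the argument with the sliding-window maximum $W(t)$. The paper never enters this windowed regime: its proof evaluates the candidate cost as $\tilde{V}_i(t+1)=V_i(t)-\|c_i^*(t|t)\|^2_{P_i}$, i.e., it works with the one-step shift of the agent's \emph{own} time-$t$ optimal $c$-sequence (this is \eqref{whl6-eq:23} with $t''=t$; it is coherent because the cost \eqref{whl6-eq:16} depends only on the agent's own $c$-sequence, while the delays enter only through the neighbors' assumed states, hence through the constraints, whose satisfaction is exactly what \textbf{Lemma \ref{whl6-lem:2}} certifies). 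This yields genuine monotonicity $V_i(t+1)-V_i(t)\le-\|c_i^*(t|t)\|^2_{P_i}$ at every $t$, the telescoping sum gives $\sum_{t}\|c_i^*(t|t)\|^2_{P_i}\le V_i(0)<\infty$, and $c_i(t)\to 0$ follows from positive definiteness of $P_i$ — your own ``delay-free case'' computation, applied unconditionally.

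The gap in your version is the step you flag yourself, and it is genuine: $W^\star=0$ is announced as a target but never established, and it does not follow from what precedes it. Monotone convergence of $W$ gives only $W(t)-W(t+1)\to 0$, whereas your descent bounds $\|c_i(t'')\|^2_{P_i}\le W(t)-V_i(t+1)$, and $V_i(t+1)$ may sit persistently below $W(t+1)$; nothing in the inequalities forces $W(t)-V_i(t+1)\to 0$, so you cannot even conclude that the realized corrections $c_i(t''(t))$ vanish, much less that $W^\star=0$ or $V_i(t)\to 0$. There is a second, independent obstruction: the index $t''(t)\in\mathbb{N}_{[t+1-\bar{\tau},t]}$ is dictated by the realized delays and need not sweep all of $\mathbb{N}$, so even a successful windowed descent would control $c_i$ only along a subsequence of times, not give $\lim_{t\to\infty}c_i(t)=0$. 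Finally, your closing claim that the conditions \eqref{whl6-eq:21a} of \textbf{Lemma \ref{whl6-lem:2}} are ``precisely where'' the delicate point is resolved is a non sequitur: those spectral-radius conditions buy feasibility of the candidate — i.e., the availability of the descent inequality, which you already used to obtain $W(t+1)\le W(t)$ — and contribute nothing further toward the limit being zero. The repair is to abandon the window and argue as the paper does: take the candidate whose $c$-part is the shift of $\bm{c}_i^*(t)$, so that the strict-descent inequality holds at every time step and the telescoping argument closes.
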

\begin{proof}
To prove the convergence of $c_i(t)$ as $t\to \infty$, we introduce the following function
\begin{equation*}
V_i(t):=J_i(\bm{c}_i^*(t))=\sum_{k=0}^{N-1}\|c_i^*(t+k|t)\|_{P_i}^2.
\end{equation*}

For the control candidate sequence $\tilde{\bm{c}}_i(t+1)$, we have
\begin{equation*}
\begin{aligned}
\tilde{V}_i(t+1)=&\sum_{k=0}^{N-1}\|\tilde{c}_i(t+1+k|t+1)\|_{P_i}^2\\
=&V_i(t)-\|c_i^*(t|t)\|_{P_i}^2.
\end{aligned}
\end{equation*}

The control candidate sequence $\tilde{\bm{c}}_i(t+1)$ is a feasible, but not necessarily optimal solution of the robust DMPC optimization problem $\mathcal{P}_i$ at $t+1$. Hence, it follows
\begin{equation*}
V_i(t+1)\leq \tilde{V}_i(t+1)=V_i(t)-\|c_i^*(t|t)\|_{P_i}^2.
\end{equation*}

Furthermore, one gets
\begin{equation}\label{whl6-eq:40}
V_i(t+1)-V_i(t)\leq-\|c_i^*(t|t)\|_{P_i}^2,
\end{equation}
which implies the Lyapunov function $V_i(t)$ is monotonically non-increasing as $t\to \infty$. Upon summing up $V_i(t+1)-V_i(t)$ in \eqref{whl6-eq:40}, we have
\begin{equation}
\begin{aligned}
&\lim_{k\to\infty}\sum_{t=0}^{k}(V_i(t+1)-V_i(t))\\
=&\lim_{k\to\infty}V_i(k+1)-V_i(0)\\
\leq&-\lim_{k\to\infty}\sum_{t=0}^k\|c_i^*(t|t)\|_{P_i}^2,
\end{aligned}
\end{equation}
and $V_i(t)$ as $t\to \infty$, satisfies
\begin{equation*}
0\leq V_i(\infty)\leq V_i(0)-\lim_{k\to\infty}\sum_{t=0}^k\|c_i^*(t|t)\|_{P_i}^2< \infty,
\end{equation*}
where $V_i(\infty):=\lim_{t\to\infty}V_i(t)$. Hence, it holds that
\begin{equation*}
\lim_{t\to\infty}\|c_i^*(t|t)\|_{P_i}^2=0.
\end{equation*}

Thus, $\lim_{t\to\infty}\|c_i(t)\|=0$ holds. By now, we have proved that the control variable $c_i(t)$ vanishes to zero as $t\to \infty$.
\end{proof}
The next lemma (see \cite[Lemma 7]{nedic2010constrained}) discussing the convergence property of the summable convolution sequence will be used in the proof of \textbf{Theorem} \ref{whl6-thm:1}.
\begin{lem}\label{whl6-lem:4}
For any given scalar $\beta\in(0,1)$, and the summable sequence $\{\alpha(t)\}$ satisfying $\lim_{t\to\infty}\alpha(t)=0$, it holds that $\lim_{k\to\infty}\sum_{t=0}^{k}\beta^{k-1-t}\alpha(t)=0$.
\end{lem}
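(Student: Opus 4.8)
The plan is to treat the quantity $S_k := \sum_{t=0}^{k}\beta^{k-1-t}\alpha(t)$ as a discrete convolution of the vanishing sequence $\{\alpha(t)\}$ against the geometric kernel $\beta^{k-1-t}$, and to bound it by the classical $\varepsilon$-splitting technique. Fix an arbitrary $\varepsilon>0$. Since $\lim_{t\to\infty}\alpha(t)=0$, there exists an index $T\in\mathbb{N}_{\geq 0}$ such that $|\alpha(t)|\leq\varepsilon$ for all $t\geq T$. I would then split the sum at this index,
\begin{equation*}
S_k=\sum_{t=0}^{T-1}\beta^{k-1-t}\alpha(t)+\sum_{t=T}^{k}\beta^{k-1-t}\alpha(t),
\end{equation*}
and estimate the two pieces separately.

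For the head (the finite sum over $t\in\mathbb{N}_{[0,T-1]}$), note that $T$ is fixed while each exponent $k-1-t\to\infty$ as $k\to\infty$; since $\beta\in(0,1)$, every one of the finitely many terms $\beta^{k-1-t}\alpha(t)$ tends to $0$, so the head vanishes as $k\to\infty$. For the tail, I would factor out the uniform bound $|\alpha(t)|\leq\varepsilon$ and control the remaining geometric sum uniformly in $k$. Substituting $s=k-1-t$ gives
\begin{equation*}
\Big|\sum_{t=T}^{k}\beta^{k-1-t}\alpha(t)\Big|\leq\varepsilon\sum_{s=-1}^{k-1-T}\beta^{s}\leq\varepsilon\,\beta^{-1}\sum_{s=0}^{\infty}\beta^{s}=\frac{\varepsilon}{\beta(1-\beta)},
\end{equation*}
where the single $s=-1$ term (arising from $t=k$) produces the harmless $\beta^{-1}$ factor, and the bound is independent of $k$.

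Combining the two estimates, for all sufficiently large $k$ the head is as small as we like while the tail is bounded by $\varepsilon/(\beta(1-\beta))$, so $\limsup_{k\to\infty}|S_k|\leq\varepsilon/(\beta(1-\beta))$. Since $\varepsilon>0$ is arbitrary and $\beta(1-\beta)>0$ is a fixed constant, letting $\varepsilon\to 0$ yields $\lim_{k\to\infty}S_k=0$, as claimed. I do not anticipate a genuine obstacle here; the only point requiring slight care is that the summation index reaches $t=k$, where the exponent becomes $-1$, so one must keep the $\beta^{-1}$ factor in the geometric bound rather than summing from $s=0$—but since this bound stays uniform in $k$, it does not affect the conclusion. (Note that the stated summability of $\{\alpha(t)\}$ is not needed; only $\alpha(t)\to 0$ is used.)
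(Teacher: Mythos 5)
Your proof is correct, but note that the paper itself offers no proof of this lemma at all: it is stated with a pointer to Lemma~7 of the cited reference \cite{nedic2010constrained} and used as a black box in the proof of Theorem~1. Your $\varepsilon$-splitting argument is essentially the standard proof of that cited result, so what you have produced is a self-contained justification of something the authors delegate to the literature. Two of your side observations are genuinely on point. First, the hypothesis of summability is indeed superfluous for the stated conclusion (it is redundant anyway, since summability of a real sequence already forces $\alpha(t)\to 0$); in the original reference, summability is only needed for a second assertion, namely that $\sum_k\sum_t \beta^{k-t}\alpha(t)<\infty$, which this paper's statement does not reproduce. Second, the paper's version has the exponent $k-1-t$ rather than the $k-t$ of the cited lemma, so the $t=k$ term carries the exponent $-1$; your explicit retention of the uniform $\beta^{-1}$ factor in the geometric bound $\bigl|\sum_{t=T}^{k}\beta^{k-1-t}\alpha(t)\bigr|\leq \varepsilon/(\beta(1-\beta))$ is exactly the care needed to make the statement as printed true, and the limsup argument combining the vanishing head with this $k$-uniform tail bound is airtight.
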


Now, based on \textbf{Lemmas~\ref{whl6-lem:3}} and \textbf{\ref{whl6-lem:4}}, the consensus convergence analysis of the MAS is reported as follows.
\begin{thm}\label{whl6-thm:1}
For the MAS in \eqref{whl6-eq:2}, suppose that \textbf{Assumptions~\ref{whl6-asm:1}, \ref{whl6-asm:2}, \ref{whl6-asm:3}} and the condition in \textbf{Lemma \ref{whl6-lem:1}} hold. Then, the MAS converges to a neighborhood of the consensus set under the proposed robust DMPC consensus protocol.
\end{thm}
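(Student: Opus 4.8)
The plan is to reduce the closed-loop MAS to a family of decoupled modal subsystems via the spectral decomposition of the Laplacian, and then to show that every \emph{disagreement} mode is driven by a vanishing input (the decision variables $c_i$) plus a uniformly bounded disturbance (the estimation errors $w_i$). First I would assemble the per-agent closed loop given in disturbance form below \eqref{whl6-eq:13}, namely $x_i(t+1)=Ax_i(t)+B(K\sum_{j\in\mathcal{N}_i}a_{ij}(x_i(t)-x_j(t))+c_i(t))-BKw_i(t)$, into the compact form
\begin{equation*}
\mathbf{x}(t+1)=(I_M\otimes A+\mathcal{L}\otimes BK)\mathbf{x}(t)+(I_M\otimes B)\mathbf{c}(t)-(I_M\otimes BK)\mathbf{w}(t),
\end{equation*}
using $\sum_{j\in\mathcal{N}_i}a_{ij}(x_i-x_j)=(\mathcal{L}\mathbf{x})_i$ and writing $\mathbf{c}(t):=\col(c_1(t),\dots,c_M(t))$, $\mathbf{w}(t):=\col(w_1(t),\dots,w_M(t))$ with each $w_i(t)\in\Delta$.

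Since $\mathcal{G}$ is connected and undirected, $\mathcal{L}=U\Lambda U^\text{T}$ with $U$ orthogonal, $\Lambda=\diag(0,\lambda_2,\dots,\lambda_M)$ and $u_1\propto\mathbf{1}$. Setting $\xi(t):=(U^\text{T}\otimes I_n)\mathbf{x}(t)$ block-diagonalizes the free dynamics into $\xi_i(t+1)=(A+\lambda_iBK)\xi_i(t)+\tilde c_i(t)-\tilde w_i(t)$ for $i=1,\dots,M$, where $\tilde c_i,\tilde w_i$ are the modal components of $(U^\text{T}\otimes B)\mathbf{c}$ and $(U^\text{T}\otimes BK)\mathbf{w}$. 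Because $u_1$ spans the agreement subspace $\mathcal{C}$ and $U$ is orthogonal, the consensus distance decomposes exactly as $|\mathbf{x}(t)|_{\mathcal{C}}^2=\sum_{i=2}^M\|\xi_i(t)\|^2$; hence it suffices to bound the disagreement modes $i\ge 2$, leaving the average mode $\xi_1$ (which carries the common trajectory) untouched.

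For each $i\ge 2$, Lemma~\ref{whl6-lem:1} guarantees $A_i:=A+\lambda_iBK$ is Schur, so $\|A_i^k\|\le c_0\beta^k$ for some $c_0>0$, $\beta\in(0,1)$. Variation of constants gives
\begin{equation*}
\xi_i(t)=A_i^t\xi_i(0)+\sum_{s=0}^{t-1}A_i^{t-1-s}\big(\tilde c_i(s)-\tilde w_i(s)\big).
\end{equation*}
The homogeneous term vanishes. For the $\tilde c_i$ term, Lemma~\ref{whl6-lem:3} (whose persistent-feasibility hypothesis is supplied by Lemma~\ref{whl6-lem:2}) yields $c_i(t)\to0$, hence $\tilde c_i(t)\to0$; bounding $\|A_i^{t-1-s}\|$ by the scalar geometric factor $c_0\beta^{t-1-s}$ lets me invoke Lemma~\ref{whl6-lem:4} to conclude this convolution tends to zero. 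For the $\tilde w_i$ term, each $w_i(t)\in\Delta$ yields a uniform bound $\|\tilde w_i(s)\|\le\mu\eta$, so $\limsup_{t\to\infty}\|\sum_s A_i^{t-1-s}\tilde w_i(s)\|\le c_0\mu\eta/(1-\beta)$. Summing over $i\ge2$ produces $\limsup_{t\to\infty}|\mathbf{x}(t)|_{\mathcal{C}}\le\gamma$ with $\gamma$ proportional to the estimation-error radius $\eta$, i.e. the MAS enters a neighborhood of $\mathcal{C}$ characterized by the invariant set $\mathcal{R}_i^\infty$.

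The main obstacle is the disturbance term: unlike the vanishing $c_i$ contribution, the estimation errors do not decay, so the best achievable is convergence to a $\gamma$-neighborhood rather than exact consensus, and care is needed to (i) translate the modal bound $c_0\mu\eta/(1-\beta)$ back into the physical per-agent guarantee $\|x_i-x_j\|\le\gamma$ and its invariant set $\mathcal{R}_i^\infty$, and (ii) justify replacing the matrix powers $A_i^{k}$ by a single scalar geometric sequence so that the scalar convolution Lemma~\ref{whl6-lem:4} applies simultaneously to all disagreement modes. A secondary point is verifying that the mixing matrices $(U^\text{T}\otimes B)$ and $(U^\text{T}\otimes BK)$ furnish the claimed uniform constant $\mu$ independent of $t$.
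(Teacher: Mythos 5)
Your proposal is correct and takes essentially the same approach as the paper: orthogonal diagonalization of $\mathcal{L}$ to decouple the modes, Schur stability of the disagreement blocks $A+\lambda_iBK$ from Lemma~\ref{whl6-lem:1}, the vanishing input $c_i(t)\to 0$ from Lemma~\ref{whl6-lem:3} (with feasibility from Lemma~\ref{whl6-lem:2}), Lemma~\ref{whl6-lem:4} to kill the geometric convolution with $\mathbf{c}$, and a uniform bound on the $\mathbf{w}$-convolution yielding a neighborhood of $\mathcal{C}$ proportional to $\eta$ (the paper's $\bm{R}^\infty$). The only cosmetic difference is that you apply $(U^{\mathrm{T}}\otimes I_n)$ directly to $\mathbf{x}$ and discard the first mode via the orthogonal decomposition of $|\mathbf{x}|_{\mathcal{C}}$, whereas the paper first subtracts the average state and works with the disagreement vector $\bm{\xi}$ and the projection matrices $I'_{Mm}$, $I'_{Mn}$ --- the two routes are mathematically equivalent.
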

\begin{proof}
Applying the optimal control input $u_i^*(t|t)$ to the MAS in \eqref{whl6-eq:2} yields $x_i(t+1)=Ax_i(t)+Bu_i^*(t|t)$, with $x_i(t|t)=x_i(t)$. Then, we know that $x_i(t+1)=x_i^*(t+1|t)$. The closed-loop MAS in \eqref{whl6-eq:22} can be rewritten as
\begin{equation*}
\begin{aligned}
\mathbf{x}(t+1)=&(I_M\otimes A+\mathcal{L}\otimes BK)\mathbf{x}(t)+(I_M\otimes B)\mathbf{c}(t)\\
&+(I_M\otimes BK)\mathbf{w}(t),
\end{aligned}
\end{equation*}
where $\mathbf{c}(t):=\col(c_1(t),c_2(t),\dots,c_M(t))$. The average state of the MAS is defined by $\bar{x}(t):={1/M}(\bm{1}^\T\otimes I_n) \mathbf{x}(t)\in\mathbb{R}^n$, with $\bm{1}$ being an all-one vector of proper dimensions. Then,
\begin{equation}
\begin{aligned}
\bar{x}(t+1)=& \frac{1}{M}(\bm{1}^\T \otimes A)\mathbf{x}(t)+\frac{1}{M}(\bm{1}^\T \mathcal{L}\otimes BK)\mathbf{x}(t)\\
&+\frac{1}{M}(\bm{1}^\T \otimes B)\mathbf{c}(t)+\frac{1}{M}(\bm{1}^\T \otimes BK)\mathbf{w}(t)\\
=& \frac{1}{M}(\bm{1}^\T \otimes A)\mathbf{x}(t)+B\bar{c}(t)+BK\bar{w}(t)\\
=&  A\bar{x}(t)+B\bar{c}(t)+BK\bar{w}(t),
\end{aligned}
\end{equation}
with $\bar{c}(t):={1/M}(\bm{1}^\T\otimes I_n) \mathbf{c}(t)$ and $\bar{w}(t):={1/M}(\bm{1}^\T\otimes I_n) \mathbf{w}(t)$. Define further $\bar{\mathbf{c}}(t)=1/M\big((\bm{1}^\T\bm{1})\otimes I_m\big)\mathbf{c}(t)$ and $\bar{\mathbf{w}}(t)=1/M\big((\bm{1}^\T\bm{1})\otimes I_n\big)\mathbf{w}(t)$. Using notation $\xi_i(t):=x_i(t)-\bar{x}(t)$ and $\bm{\xi}:=\col(\xi_1,\xi_2,\dots,\xi_M)$, we have that
\begin{equation}\label{whl6-eq:43}
\begin{aligned}
&\bm{\xi}(t+1)\\
=&(I_M\otimes A+\mathcal{L}\otimes BK)\bm{\xi}(t)\\
&+(I_M\otimes B)I_{Mm}'\mathbf{c}(t)+(I_M\otimes BK)I_{Mn}'\mathbf{w}(t),
\end{aligned}
\end{equation}
with $I_{Mm}'=I_{Mm}-{1/M}\big((\bm{1}^\T\bm{1})\otimes I_m\big)$ and $I_{Mn}'=I_{Mn}-{1/M}\big((\bm{1}^\T\bm{1})\otimes I_n\big)$.

% The first method
Let $U:=[\bm{1}/\sqrt{M},U_2,\dots,U_M]\in\mathbb{R}^{M\times M}$ be an orthogonal matrix with $U_i^\T\mathcal{L}=\lambda_iU_i^\T$ and the Laplacian matrix can be diagonalized, that is
\begin{equation*}
U^\T \mathcal{L}U=\diag(0,\lambda_2,\dots,\lambda_M).
\end{equation*}

Using the property of Kronecker product, we obtain that
\begin{equation*}
\begin{aligned}
&(U^\T\otimes I_n)(I_M\otimes A+\mathcal{L}\otimes BK)(U\otimes I_n)\\
=&\diag(A,A+\lambda_2BK,\dots,A+\lambda_MBK).
\end{aligned}
\end{equation*}

Define $\tilde{\bm{\xi}}(t)=\col(\tilde{\xi}_1(t),\tilde{\xi}_2(t),\dots,\tilde{\xi}_M(t)):=(U^\T \otimes I_n)\bm{\xi}(t)$. Then \eqref{whl6-eq:43} can be expressed as
\begin{equation}
\begin{aligned}
&\tilde{\bm{\xi}}(t+1)\\
=&\diag(A,A+\lambda_2BK,\dots,A+\lambda_MBK)\tilde{\bm{\xi}}(t)\\
&+(U^\T \otimes I_n)(I_M\otimes B)I_{Mm}'\mathbf{c}(t)\\
&+(U^\T \otimes I_n)(I_M\otimes BK)I_{Mn}'\mathbf{w}(t).
\end{aligned}
\end{equation}

% The second method
Next, we define the transition matrix $\Phi:=\diag(A,A+\lambda_2BK,\dots,A+\lambda_MBK)$, $\mathcal{B}:=(U^\T \otimes I_n)(I_M\otimes B)I_{Mm}'$, and $\mathcal{B}_k:=(U^\T \otimes I_n)(I_M\otimes BK)I_{Mn}'$, then \eqref{whl6-eq:43} becomes
\begin{equation}
\tilde{\bm{\xi}}(t+1)=\Phi\tilde{\bm{\xi}}(t)+\mathcal{B}\mathbf{c}(t)+\mathcal{B}_k\mathbf{w}(t),
\end{equation}
and thus
\begin{equation*}
\tilde{\bm{\xi}}(t)=\Phi^{t}\tilde{\bm{\xi}}(0)+\sum_{k=0}^{t-1}\Phi^k\mathcal{B}\mathbf{c}(t-1-k)+\sum_{k=0}^{t-1}\Phi^k\mathcal{B}_k\mathbf{w}(t-1-k),
\end{equation*}
with $t\in\mathbb{N}_{\geq 1}$. From the definition of $\tilde{\bm{\xi}}(t)=(U^\T \otimes I_n)\bm{\xi}(t)$, it is easy to know that $\tilde{\xi}_1(t)=1/\sqrt{M}\big(\sum_{i=1}^M\xi_i(t)\big)=0$. Also, 
By invoking \textbf{Lemma~\ref{whl6-lem:1}} (i.e., $\rho(A+\lambda_iBK)<1$, $i=2,3,\dots,M$), we get 
\begin{equation}\label{whl6-eq:47}
\lim_{t\to\infty}\Phi^{t}\tilde{\bm{\xi}}(0)=0.
\end{equation}

From $I_{Mm}'=I_{Mm}-{1/M}\big((\bm{1}^\T\bm{1})\otimes I_m\big)=\big(I_{M}-{1/M}(\bm{1}^\T\bm{1})\big)\otimes I_m$, one gets the matrix $I_{Mm}'$ has eigenvalue $0$ with multiplicity $m$ and eigenvalue $1$ with multiplicity $Mm-m$. Here, it is obtained that $\rho(I_{Mm}')\leq 1$. In addition, $\rho(A+\lambda_iBK)\leq 1$, one gets $I_{Mm}'(\rho(A+\lambda_iBK))^k< 1$, $k\in\mathbb{N}_{\geq 1}$. In light of this, there always exists a constant $\beta\in(0,1)$ such that
\begin{equation}\label{whl6-eq:48}
\|I_{Mm}'\Phi^k\|\leq\beta^k<1.
\end{equation}

Let $E(t-1-k)=\|(U^\T \otimes I_n)\|\|I_M\otimes B\|\|\mathbf{c}(t-1-k)\|$. 

By application of the Cauchy-Schwarz inequality and the inequality in \eqref{whl6-eq:48}, we have

\begin{equation}
\begin{aligned}
&\lim_{t\to\infty}\sum_{k=0}^{t-1}\Phi^k\mathcal{B}\mathbf{c}(t-1-k)\\
\leq&\lim_{t\to\infty}\sum_{k=0}^{t-1}\|\Phi^k\mathcal{B}\mathbf{c}(t-1-k)\|\\
\leq&\lim_{t\to\infty}\sum_{k=0}^{t-1}\|I_{Mm}'\Phi^k\|\cdot\|(U^\T \otimes I_n)\|\\
&\cdot\|I_M\otimes B\|\cdot\|\mathbf{c}(t-1-k)\|\\
\leq&\lim_{t\to\infty}\sum_{k=0}^{t-1}\beta^kE(t-1-k).
\end{aligned}
\end{equation}

Further, according to \textbf{Lemmas~\ref{whl6-lem:3}} and \textbf{\ref{whl6-lem:4}}, we obtain
\begin{equation}\label{whl6-eq:51}
\lim_{t\to\infty}\sum_{k=0}^{t-1}\Phi^k\mathcal{B}\mathbf{c}(t-1-k)=0.
\end{equation}

From the definition of set $\mathcal{R}_i^\infty$, we know that
\begin{equation}\label{whl6-eq:52}
\lim_{t\to\infty}\sum_{k=0}^{t-1}\Phi^k\mathcal{B}_k\mathbf{w}(t-1-k)=\bm{R}^\infty,
\end{equation}
where $\bm{R}^{\infty}=\col(\mathcal{R}_1^{\infty},\mathcal{R}_2^{\infty},\dots,\mathcal{R}_M^{\infty})$. Combining \eqref{whl6-eq:47}, \eqref{whl6-eq:51} and \eqref{whl6-eq:52} gives
\begin{equation}
\lim_{t\to\infty}\tilde{\bm{\xi}}(t)\in\bm{R}^{\infty},
\end{equation}
which implies that the MAS converges to a neighborhood of the consensus set. The proof is completed. 
\end{proof}

It is shown that the sequence $c_i(t)$ generated by solving the online DMPC optimization problem converges to zero in \textbf{Lemma \ref{whl6-lem:3}}. Then, we prove the consensus convergence of the constrained MAS with bounded time-varying delays controlled by the proposed consensus protocol in \textbf{Theorem \ref{whl6-thm:1}}. It is worth mentioning that the results can be extended to the cases where the MAS are affected by unknown additive disturbances.

\section{Numerical examples}
\label{whl6-sec:6}
In this section, two numerical examples are conducted to verify the effectiveness of the developed theoretical results.

\noindent\textbf{Example 1}: \emph{Linear MAS with semi-stable dynamics}\\
Consider the MAS consisting of five discrete-time linear systems and each agent $i$ is described by \cite{li2018receding}
\begin{equation}
x_i(t+1)=Ax_i(t)+Bu_i(t), \ i=1,2,\dots,5,
\end{equation}
with
\begin{equation*}
A=\begin{bmatrix}
0.8& 0.1& 0.1& 0& 0\\
0& 0.9 &0& 0.1& 0\\
0.1 &0.1 &0.6& 0.1 &0.1\\
0 &0.1 &0.1& 0.8 &0\\
0.1 &0.1 &0& 0 &0.8\\
\end{bmatrix},\ B=\begin{bmatrix}
-0.1 &0.1\\
0.1&-0.2\\
0&-0.3\\
0.08&0.1\\
0.2&0.08\\
\end{bmatrix}.
\end{equation*}

The control input constraints of five agents are all set as $\|u_i\|_{\infty}\leq0.3$. The initial states of five agents are set as $x_1(0)=[0.94;1.22;-1.12; 1.24; 0.40]$, $x_2(0)=[1.21; -0.66;$ $0.14; 1.37; 1.39]$, $x_3(0)=[-1.03; 1.41; 1.37; -0.04; 0.90]$, $x_4(0)=[-1.07;- 0.23; 1.25; 0.88; 1.38]$ and $x_5(0)=[0.47; $ $-1.39;1.05; 1.30; 0.54]$, respectively. The maximum communication delay $\bar{\tau}=3$. The Laplacian matrix concerning the communication graph $\mathcal{G}$ is
\begin{equation*}
\mathcal{L}=\begin{bmatrix}
1& -0.5&0&-0.5&0\\
-0.5&1&-0.5&0&0\\
0&-0.5&1&0&-0.5\\
-0.5&0&0&1&-0.5\\
0&0&-0.5&-0.5&1\\
\end{bmatrix}.
\end{equation*}

The predesigned consensus gain matrix is chosen as $K = [0.1258,-0.1015, 0.0542, 0.0071,-2.443; -0.0787, 0.1863,$ $ 0.0637, 0.0982,-0.1376]$ following the method in \cite{li2018receding}. The prediction horizon is selected as $N=10$, and the estimation error set $\Delta=\{\delta\mid\|\delta\|\leq0.3\}$. Let $\epsilon=\sqrt{60}$. The weighting matrix ${P}_i=I_2$ and\\ 
 $S =\begin{bmatrix}
2.551&-0.447& 0.119&-0.813&-1.069\\
-0.447& 4.028& 0.227&1.356& -2.664\\
 0.119& 0.227& 1.799& 0.74&-2.431\\
-0.813& 1.356&0.74& 3.884&-3.689\\
-1.069&-2.664&-2.431&-3.689& 10.081\\
\end{bmatrix}$.

\begin{figure}[!ht]
\includegraphics[width=1\columnwidth]{./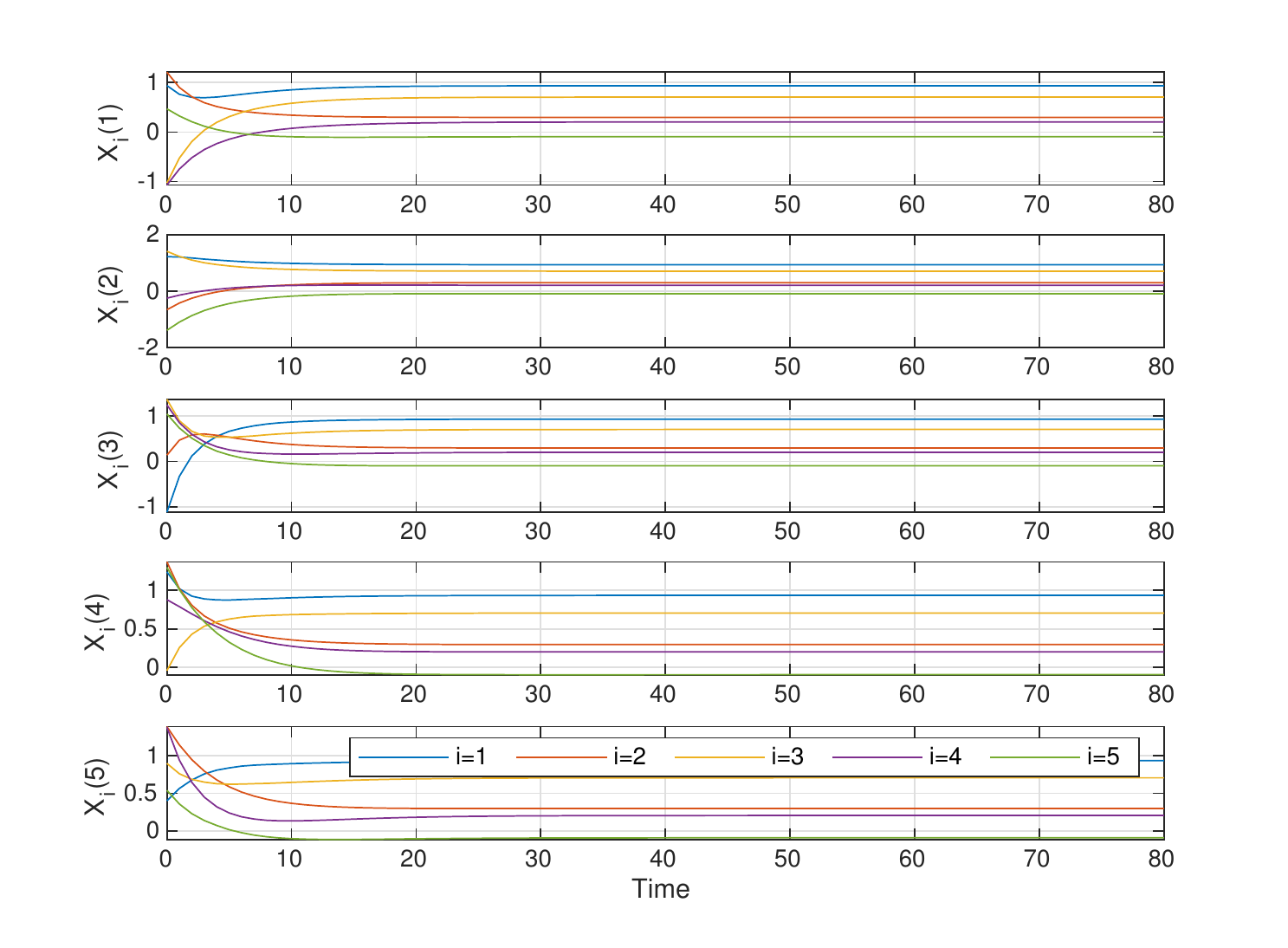}
\caption{States $x_i(t)$ of five agents under the consensus protocol in \cite{li2018receding}. }
\label{whl6-fig:3}
\end{figure}

\begin{figure}[!ht]
\includegraphics[width=1\columnwidth]{./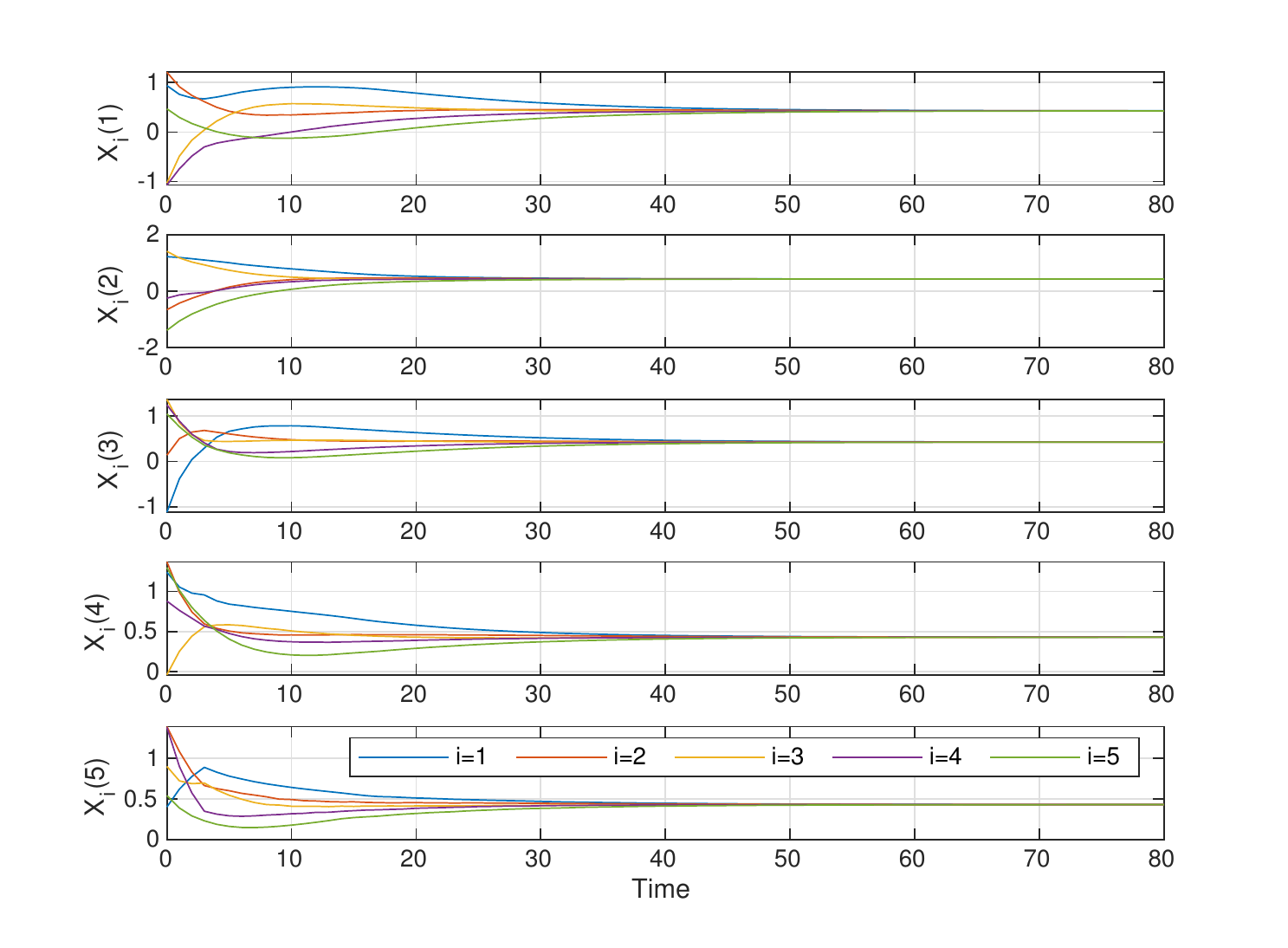}
\caption{States $x_i(t)$ of five agents under the proposed protocol.}
\label{whl6-fig:4}
\end{figure}

\begin{figure}[!ht]
\centering
  \includegraphics[clip,width=1\columnwidth]{./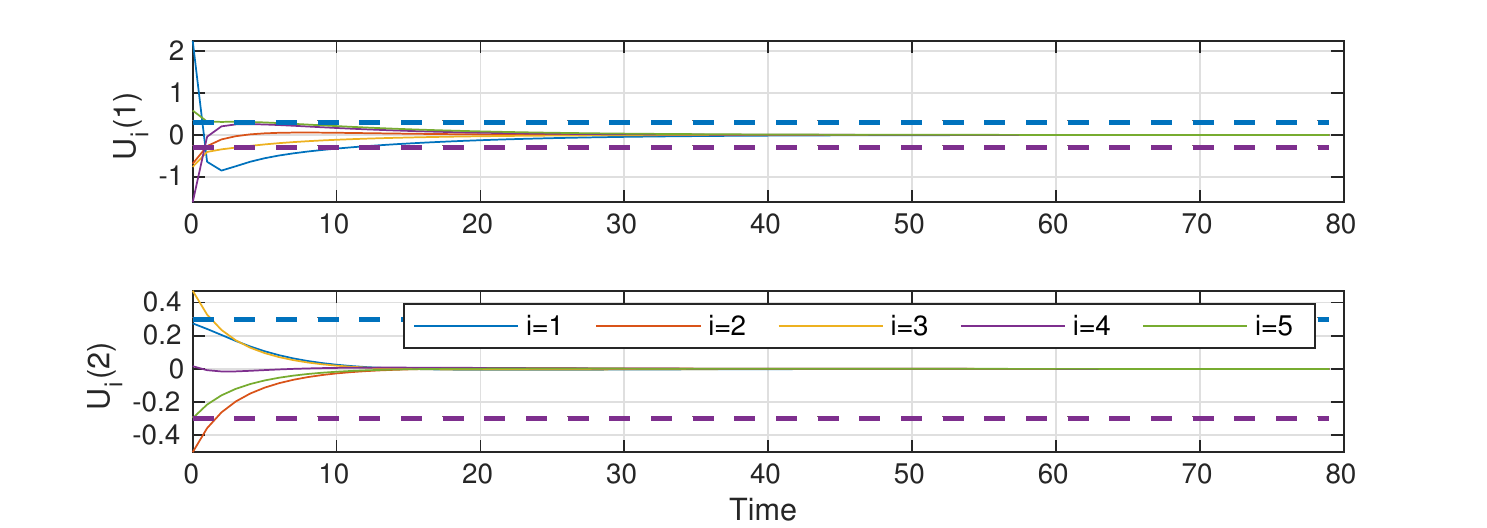}
\caption{Control inputs $\hat{\varkappa}_i(t)$ of five agents under the predesigned consensus protocol.}\label{whl6-fig:5}
\end{figure}

\begin{figure}[!ht]
\centering
  \includegraphics[clip,width=1\columnwidth]{./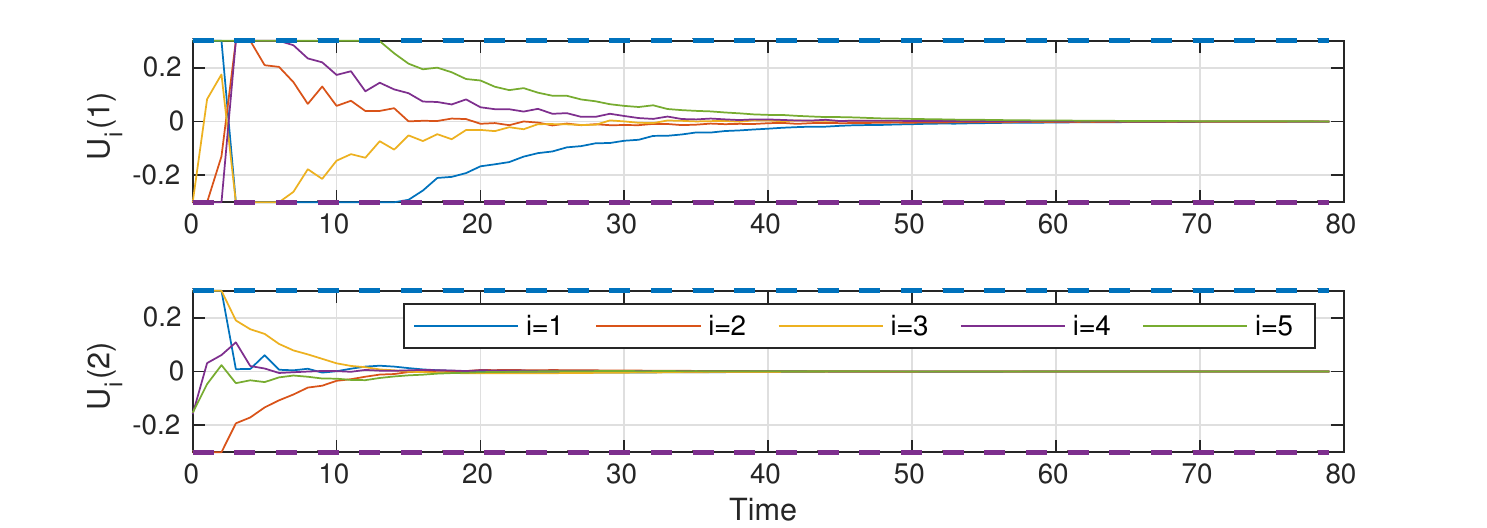}
\caption{Control inputs $u_i(t)$ of five agents under the robust DMPC-based consensus protocol.}\label{whl6-fig:6}
\end{figure}

\begin{figure}[!ht]
\centering
  \includegraphics[clip,width=1\columnwidth]{./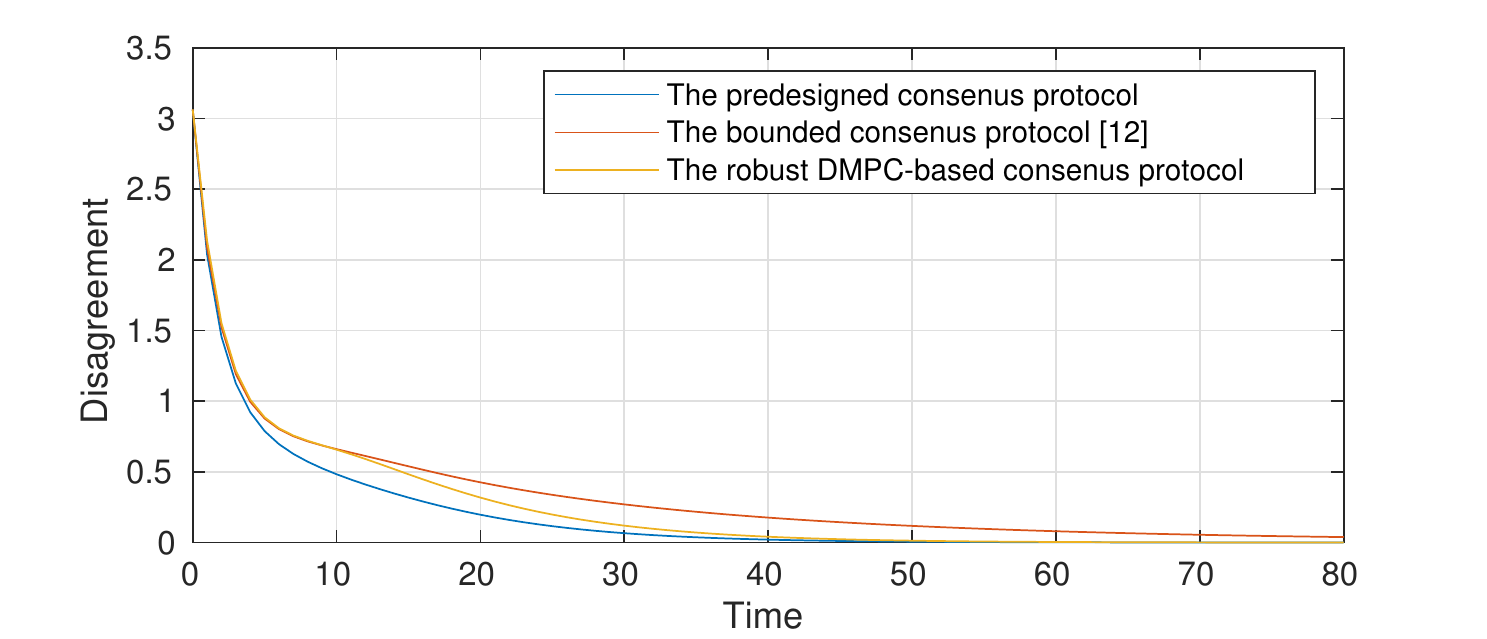}
\caption{Consensus performance comparison of different consensus protocols.}\label{whl6-fig:6a}
\end{figure}

Figs.~\ref{whl6-fig:3} and \ref{whl6-fig:4} show the states of five agents with communication delays under the consensus protocol in \cite{li2018receding} and the robust DMPC-based consensus protocol. Clearly, the consensus is not achieved under the protocol in \cite{li2018receding}, due to time-varying delays among agents. It can be seen from Fig.~\ref{whl6-fig:4} that five agents reach consensus under the proposed protocol.%, which is in line with Theorem \ref{whl6-thm:1}. 

On the other hand, the control inputs of all agents under the predesigned consensus protocol and the robust DMPC-based consensus protocol are shown in Figs.~\ref{whl6-fig:5} and \ref{whl6-fig:6}. It can be observed that the MAS only under the proposed consensus protocol satisfies control input constraints. 

To quantify the consensus performance, we define a disagreement function $D(t)=\sum_{i=1}^M\sum_{j\in\mathcal{N}_i}a_{ij}\|x_i(t)-x_j(t)\|/M$ for the MAS under different consensus protocols. The bounded consensus protocol from \cite{yang2014global} is given as 
\begin{equation*}
\begin{aligned}
u_i=\begin{dcases}
0.3,&u_i\geq 0.3,\\
-0.3,&u_i\leq -0.3,\\
K'\sum_{j\in\mathcal{N}_i}a_{ij}(x_i(t)-x_j(t)),&-0.3< u_i< 0.3,\\
\end{dcases}
\end{aligned}
\end{equation*}
in which $K'=[0.0846, -0.1523, 0.0028, -0.1044, -0.2256;\\-0.0818, 0.2567, 0.2256, -0.0423, -0.0479]$. As illustrated in Fig. \ref{whl6-fig:6a}, all disagreement trajectories monotonically decrease. The MAS under the predesigned consensus protocol converges at a faster rate than the proposed robust DMPC-based consensus protocol and the bounded consensus protocol in \cite{yang2014global}. However, the control input constraints cannot be guaranteed under the predesigned consensus protocol as reported in Fig.~\ref{whl6-fig:5}. In contrast, the other two types of consensus protocols guarantee the satisfaction of control input constraints. In particular, the proposed DMPC-based consensus protocol achieves a faster consensus convergence rate than the bounded consensus protocol in \cite{yang2014global}.

\noindent\textbf{Example 2}: \emph{Linear MAS with unstable dynamics}\\ 
Consider the MAS consisting of four discrete-time oscillators and each agent $i$ is characterized by \cite{nguyen2015sub}
\begin{equation}\label{whl6-eq:54}
x_i(t+1)=Ax_i(t)+Bu_i(t), \ i=1,2,3,4,
\end{equation}
with 
\begin{equation*}
A=\begin{bmatrix}
0& 1\\
-1.15&0\\
\end{bmatrix},\ B=\begin{bmatrix}
0.5\\
0.5\\
\end{bmatrix}.
\end{equation*}

The control input constraints of four agents are all set as $\|u_i\|_{\infty}\leq0.1$. The initial states of four agents are set as $x_1(0)=[-0.18; 0.21]$, $x_2(0)=[0.32; -0.18]$, $x_3(0)=[-0.29; -0.14]$ and $x_4(0)=[-0.22; 0.24]$, respectively. The maximum communication delay $\bar{\tau}=2$. The Laplacian matrix of $\mathcal{G}$ is
\begin{equation*}
\mathcal{L}=\begin{bmatrix}
1& -0.5&0&-0.5\\
-0.5&1&-0.5&0\\
0&-0.5&1&-0.5\\
-0.5&0&-0.5&1\\
\end{bmatrix}.
\end{equation*}

The prediction horizon is $N=7$ and the estimation error set is $\Delta=\{\delta\mid\|\delta\|\leq0.1\}$. Let $\epsilon=\sqrt{0.96}$. The weighting matrix $P_i=50$, the predesigned consensus gain matrix is designed as $K = [0.2748,-0.3148]$ and $S =\begin{bmatrix}
4.4733  &  0.8746\\
0.8746   & 3.3690\\
\end{bmatrix}$. 

Figs.~\ref{whl6-fig:7} and \ref{whl6-fig:8} show the states of four agents with communication delays using the consensus protocol in \cite{li2018receding} and the proposed robust DMPC-based consensus protocol. It can be observed from Fig.~\ref{whl6-fig:7} that unstable agents with communication delays cannot reach consensus under the protocol in \cite{li2018receding}. In contrast, as shown in Fig. \ref{whl6-fig:8} four agents with communication delays asymptotically converge to consensus. The simulation results verify that the proposed consensus protocol applies to the general linear constrained MAS with semi-stable and unstable dynamics.

In addition, as shown in Fig.~\ref{whl6-fig:9}, the control inputs of agents under the predesigned consensus protocol cannot satisfy control input constraints. In contrast, Fig.~\ref{whl6-fig:10} exhibits that all agents under the proposed consensus protocol satisfy control input constraints. 

\begin{figure}[!ht]
\includegraphics[width=1\columnwidth]{./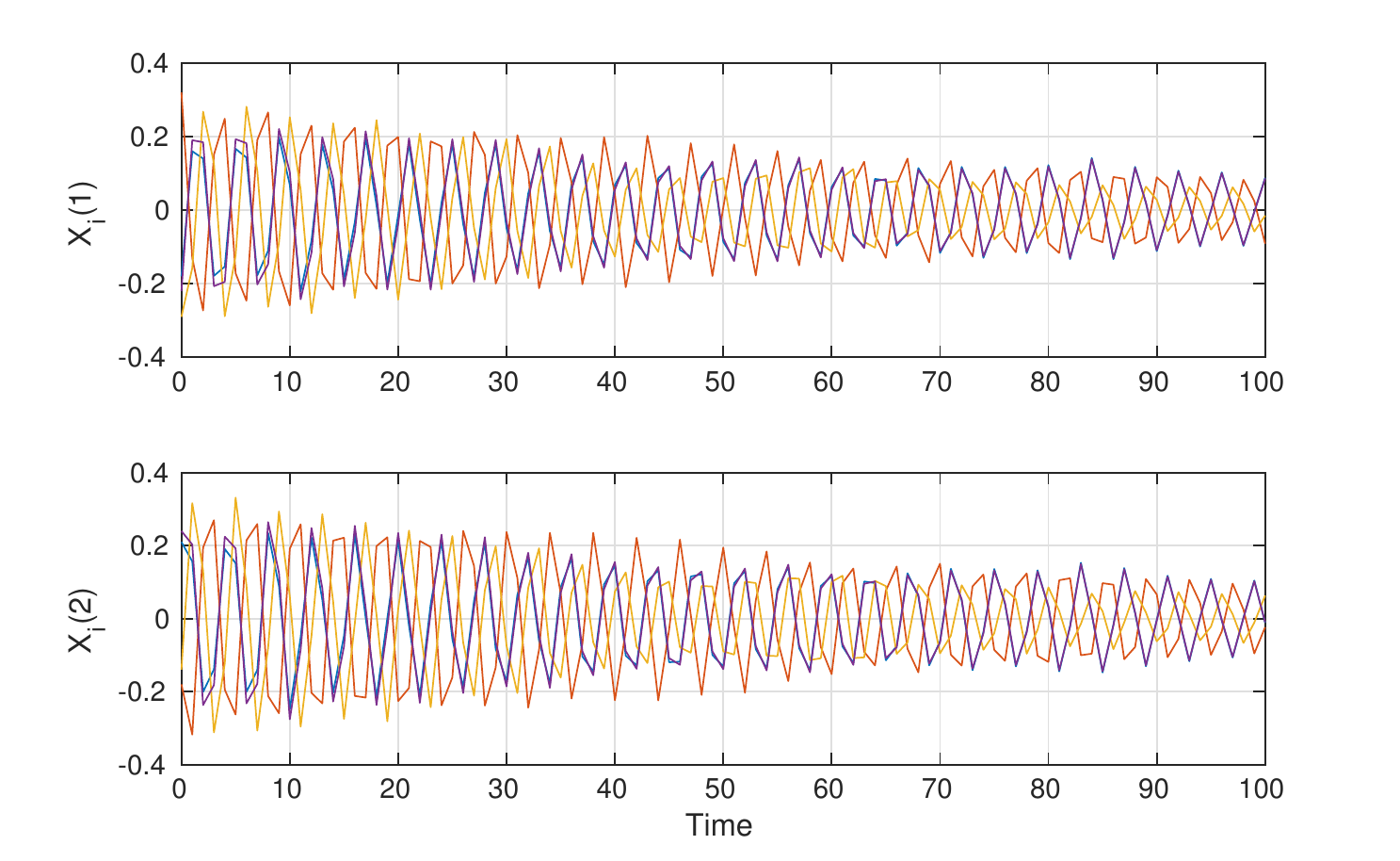}
\caption{States $x_i(t)$ of four agents under the consensus protocol in \cite{li2018receding}. }
\label{whl6-fig:7}
\end{figure}

\begin{figure}[!ht]
\includegraphics[width=1\columnwidth]{./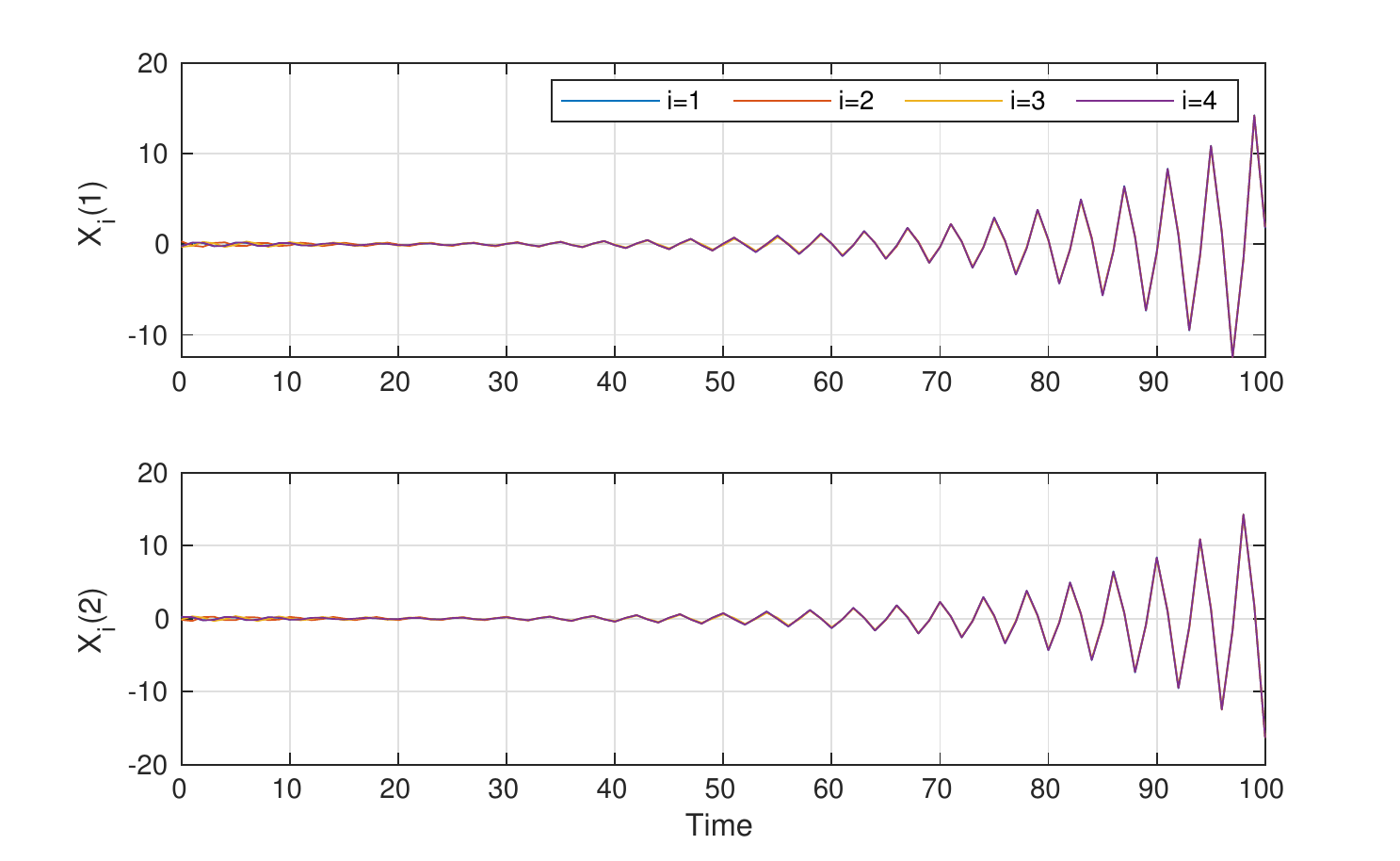}
\caption{States $x_i(t)$ of four agents under the proposed protocol.}
\label{whl6-fig:8}
\end{figure}

\begin{figure}[!ht]
\centering
  \includegraphics[clip,width=1\columnwidth]{./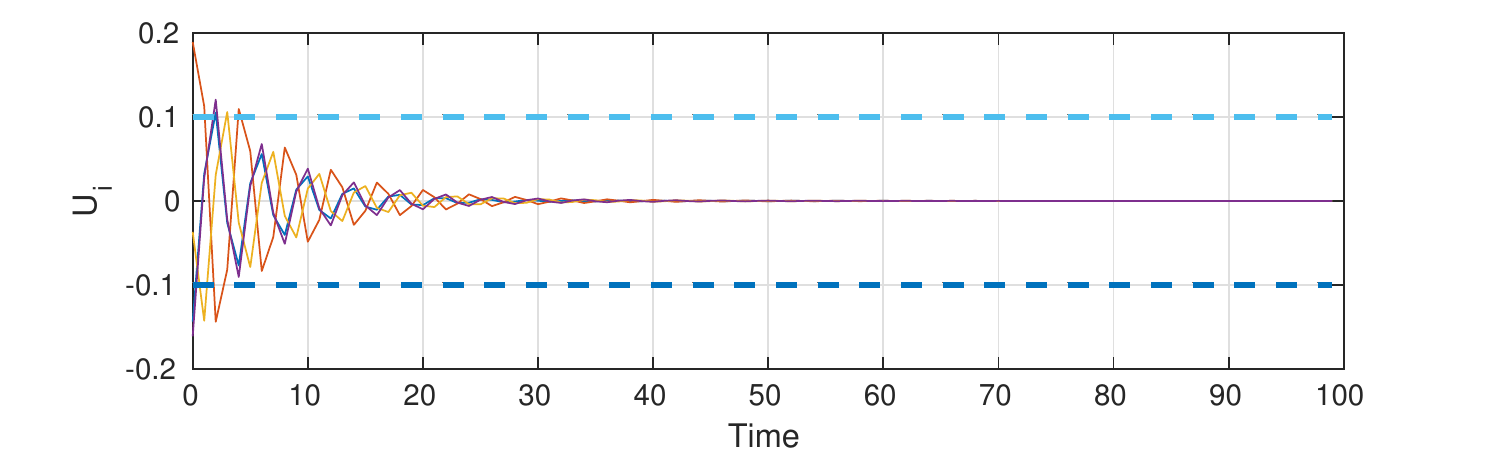}
\caption{Control inputs $\hat{\varkappa}_i(t)$ of four agents under the predesigned consensus protocol.}\label{whl6-fig:9}
\end{figure}

\begin{figure}[!ht]
\centering
  \includegraphics[clip,width=1\columnwidth]{./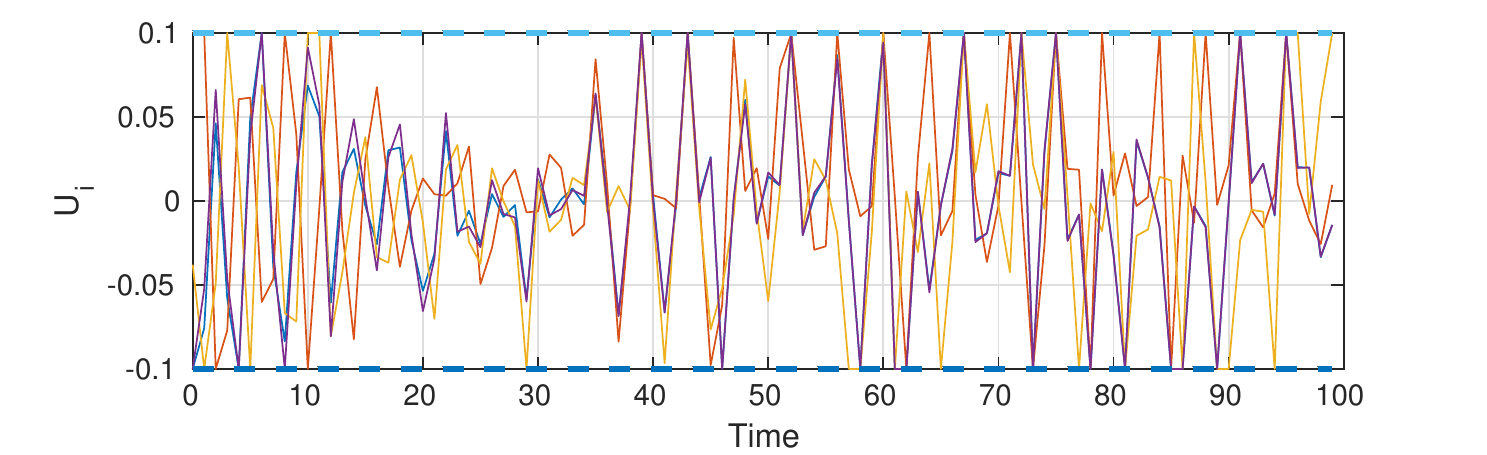}
\caption{Control inputs $u_i(t)$ of four agents under the robust DMPC-based consensus protocol.}\label{whl6-fig:10}
\end{figure}
%%%%%%%%%%%%%%%%%%%%%%%%%%%%%%%%%%%%%%%%%%%%%%%%%%%%%%%%%%%%%%%%%%%%%%%%%%%%%
        %%%%%%%%%%%%%%%%%%%%%%% Section V %%%%%%%%%%%%%%%%%%%%%%%%%%%%%
%%%%%%%%%%%%%%%%%%%%%%%%%%%%%%%%%%%%%%%%%%%%%%%%%%%%%%%%%%%%%%%%%%%%%%%%%%%%%
\section{Conclusion}
\label{whl6-sec:7}
This work designed a robust DMPC-based consensus framework for the general linear MAS with input constraints and bounded time-varying communication delays. The proposed distributed consensus protocol achieved suboptimal consensus performance using inverse optimal control and tube-based MPC techniques while satisfying control input constraints. Furthermore, the recursive feasibility of the DMPC optimization problem and the consensus convergence of the constrained MAS were rigorously analyzed. Two numerical examples were given to verify the proposed distributed consensus protocol. 

Future work will generalize the distributed consensus framework to address the time-varying and unbalanced communication networks among the MAS and will consider the privacy and resilience issues of the constrained MAS under cyber-attacks.
% \section*{Acknowledgment}
% The authors would like to thank...
\bibliographystyle{IEEEtran}
\bibliography{whl6coupling}
% \begin{IEEEbiography}{Changxin Liu}
% received the Ph.D. degree in mechanical engineering at the University of Victoria, Victoria, BC, Canada, in 2021. He is currently a Postdoctoral Researcher with the School of Electrical Engineering and Computer Science, KTH Royal Institute of Technology, Stockholm, Sweden. His research interests focus on distributed optimization and control of networked systems. He is an active reviewer for more than 10 international journals and conferences, and was an Outstanding Reviewer for \emph{IEEE Transactions on Cybernetics} in 2018.
% \end{IEEEbiography}
\begin{IEEEbiographynophoto}{Henglai Wei} received his M.Sc. degree in control theory from Northwestern Polytechnical University, Xi'an, China, in 2017, and the Ph.D. degree in mechanical engineering from the University of Victoria, Victoria, BC, Canada, in 2022. He is currently a Postdoctoral Researcher with the Faculty of Engineering and Computer Science, University of Victoria. His current research interests include model predictive control and distributed control and optimization of intelligent systems.
\end{IEEEbiographynophoto}
% if you will not have a photo at all:
\begin{IEEEbiographynophoto}{Changxin Liu}
received the Ph.D. degree in mechanical engineering at the University of Victoria, Victoria, BC, Canada, in 2021. He is currently a Postdoctoral Researcher with the School of Electrical Engineering and Computer Science, KTH Royal Institute of Technology, Stockholm, Sweden. His research interests focus on distributed optimization and control of networked systems. He is an active reviewer for more than 10 international journals and conferences, and was an Outstanding Reviewer for \emph{IEEE Transactions on Cybernetics} in 2018.
\end{IEEEbiographynophoto}

\begin{IEEEbiographynophoto}{Yang Shi}
received the Ph.D. degree in electrical and computer engineering from the University of Alberta, Edmonton, AB, Canada, in 2005. From 2005 to 2009, he was an Assistant Professor and an Associate Professor in the Department of Mechanical Engineering, University of Saskatchewan, Saskatoon, SK, Canada. In 2009, he joined the University of Victoria, Victoria, BC, Canada, where he is currently a Professor in the Department of Mechanical Engineering. He was a Visiting Professor with the University of Tokyo, Tokyo, Japan, in 2013. His current research interests include networked and distributed systems, model predictive control (MPC), cyber-physical systems (CPS), robotics and mechatronics, navigation and control of autonomous systems (AUV and UAV), and energy system applications.

Prof. Shi was a recipient of the University of Saskatchewan Student Union Teaching Excellence Award in 2007; the Faculty of Engineering Teaching Excellence Award in 2012 and the Craigdarroch Silver Medal for Excellence in Research in 2015 from the University of Victoria; the 2017 \emph{IEEE Transactions on Fuzzy Systems} Outstanding Paper Award for his coauthored paper; the JSPS Invitation Fellowship (short-term); and the Humboldt Research Fellowship for Experienced Researchers in 2018. He has been a member of the IEEE IES Administrative Committee since 2017 and is currently the Chair of IEEE IES Technical Committee on Industrial Cyber-Physical Systems. He is the Co-Editor-in-Chief of \emph{IEEE Transactions on Industrial Electronics}, and also serves as an Associate Editor for \emph{Automatica}, \emph{IEEE Transactions on Automatic Control}, and \emph{IEEE Transactions on Cybernetics}. He is a fellow of IEEE, ASME, Engineering Institute of Canada (EIC), and Canadian Society for Mechanical Engineering (CSME), and a registered Professional Engineer in British Columbia, Canada.
\end{IEEEbiographynophoto}
% % insert where needed to balance the two columns on the last page with
% % biographies
% %\newpage

% \begin{IEEEbiographynophoto}{Jane Doe}
% Biography text here.
% \end{IEEEbiographynophoto}
\end{document}